\documentclass[journal]{IEEEtran}
\IEEEoverridecommandlockouts
\usepackage{cite}
\usepackage{algorithmic}
\usepackage{algorithm}
\usepackage{amsmath,amssymb,amsfonts}
\usepackage{graphicx}
\usepackage{textcomp}
\usepackage{amsmath}
\usepackage{amsfonts}
\usepackage{caption}  
\usepackage{algorithm}

\usepackage{amsmath}

\def\BibTeX{{\rm B\kern-.05em{\sc i\kern-.025em b}\kern-.08em
    T\kern-.1667em\lower.7ex\hbox{E}\kern-.125emX}}

\usepackage[top=0.7in, bottom=1.03in, left=0.68in, right=0.68in]{geometry}

\usepackage[utf8]{inputenc}
\usepackage{multirow}
\usepackage{booktabs}
\usepackage{graphicx}
\usepackage{xcolor}  
\usepackage{makecell} 
\usepackage{booktabs}
\usepackage[table]{xcolor}
\usepackage{multirow}

\newcommand{\E}{\mathbb{E}}
\newcommand{\Var}{\mathrm{Var}}

\DeclareMathOperator*{\argmax}{arg\,max}

\newtheorem{rem}{Remark}
\newtheorem{lem}{Lemma}
    
\begin{document}

\title{Gamma-Distributed Geometric Constellation for ISAC: Design and Analysis}

\author{Amirhossein Keshavarzchafjiri,  Janith Kavindu Dassanayake,
Gayan~A.~Aruma~Baduge,~\IEEEmembership{Senior Member,~IEEE}, and Mojtaba Vaezi,~\IEEEmembership{Senior Member,~IEEE}

{\thanks{This article was presented in part at
the IEEE Global Communications Conference, December 2025 \cite{conference_version}.}}
\thanks{Amirhossein Keshavarzchafjiri and Mojtaba Vaezi are with the Department of Electrical and Computer Engineering, Villanova University, Villanova, PA 19085 USA (e-mail: \{akesha01, mvaezi\}@villanova.edu). Their work was supported by the U.S. National
Science Foundation  under Grant CCF-2326622.}
\thanks{Janith  Dassanayake and Gayan Baduge are with the School of Electrical, Computer, and Biomedical Engineering, Southern Illinois University, Carbondale, IL, USA, Email: \{janith.dassanayake, gayan.baduge\}@siu.edu. Their work in part has been supported  by the U.S. National Science Foundation under Grant CCF-2326621.}
}

\maketitle

\begin{abstract}

A novel Gamma-distributed geometric constellation design framework for integrated sensing and communication (ISAC) is proposed in this paper. In this framework, constellation points are modeled as samples drawn from a parameterized two-dimensional distribution, with a Gamma distribution for the amplitude and a uniform distribution for the phase. End-task performance metrics, namely, the probability of detection for sensing and mutual information for communication, are used as objective functions of the optimization problem, and the problem is solved via particle swarm optimization.
We further derive analytical performance bounds for the proposed design, including the union bound on the symbol error rate for communication and the Cramér–Rao bound for sensing parameter estimation. The proposed method is compared with constellations obtained via end-to-end neural network design, demonstrating competitive performance while requiring significantly fewer parameters and no training data. Moreover, the proposed geometric constellation is more compatible with conventional system architectures than probabilistic or neural network-based designs.
\end{abstract}


\section{Introduction}

Sixth-generation (6G) wireless communication systems is expected to integrate active sensing, i.e., radar,
 capabilities, enabling novel use cases in a wide range of applications, such as autonomous driving and smart cities~\cite{ liu2022isacwaveform, MSE_MIMO,vaezi2025ai}. 
In \textit{integrated sensing and communication} (ISAC), the conventional separation of sensing and communication tasks is replaced by a joint design approach that enables mutual benefits through synergy. This unified framework provides integration gain by sharing resources and coordination gain by enhancing the performance of each functionality through their interaction~\cite{holistic}.

ISAC requires unified waveform designs \cite{liu2018toward,dassanayake2025unsupervised} that jointly support sensing and communication. In this context, modulation constellation design is a key component for balancing these functionalities, as it directly influences both sensing and communication performance. A fundamental challenge arises from the inherent trade-off between the randomness that benefits communication and the deterministic characteristics preferred for optimal sensing~\cite{inner_bound}.

 However, traditional constellations, such as \emph{quadrature amplitude modulation} (QAM) and \emph{phase-shift keying} (PSK), usually excel at either communication efficiency or sensing accuracy, but not both simultaneously.  Therefore, new constellation shaping approaches are essential to balance sensing and communication requirements in ISAC systems. Constellation shaping can be broadly classified into \textit{geometric constellation shaping} (GCS) and \textit{probabilistic constellation shaping} (PCS). The former approach optimizes the location of constellation points  under the assumption of equal probability, whereas the latter retains conventional constellation structures but modifies the probability of transmitting each point~\cite{du2024pcs}. A more recent approach, known as joint shaping, simultaneously optimizes both the geometry and probabilities of the constellation points, offering greater flexibility~\cite{geiger2025joint}.

Recent research has explored various aspects of constellation shaping for ISAC. For instance,  focusing on finite-alphabet inputs, PCS-based constellation design is studied in~\cite{GS_constellation_design_journal}, where it is demonstrated that communication-oriented optimization leads to constellations that approximate a quasi-Gaussian distribution. In the context of orthogonal frequency-division multiplexing (OFDM)-based ISAC, a PCS-enabled signaling scheme is proposed in~\cite{du2024pcs}, where the statistical characteristics of the ambiguity function are controlled by optimizing the fourth moment of the constellation amplitudes. Similarly,  a PCS-based constellation design strategy is introduced in~\cite{yang2023random}, where the sidelobe level of the periodic autocorrelation function is reduced by adjusting the probability distribution of the modulation symbols.


 By investigating the practicality of PCS-based constellation design, valuable insights are provided in \cite{geiger2023experimental, experimental} through the implementation of a real-world ISAC testbed, with a particular focus on integrating PCS into an OFDM framework to jointly optimize communication efficiency and sensing performanc.  The sensing performance under mismatched filtering is further investigated in~\cite{temporal-frequency}, where the relationships among different sensing metrics, including output signal-to-noise ratio (SNR), integrated sidelobe level ratio, and the mean squared error of channel state information  estimation, are analyzed.

Besides conventional optimization methods, autoencoder-based frameworks have also been used to jointly optimize the geometric and probabilistic shaping of  constellations, to maximize generalized mutual information while satisfying sensing constraints (e.g., a target detection probability)~\cite{geiger2025joint}. For example, in~\cite{tang2024kl} the Kullback-Leibler (KL) divergence is employed as a key metric to jointly design the constellation and beamforming strategies in the presence of clutter. 
 Focusing on a bistatic ISAC setup, an end-to-end neural network (NN)-based constellation design is proposed in~\cite{learning_based}. This scheme does not require the receiver to know the transmitted symbols, and the impact of channel state information errors on both communication and sensing performance is investigated.

While probabilistic and data-driven constellation shaping have improved ISAC performance, they come with inherent drawbacks. PCS complicates transceiver design and overlooks geometric constraints, while NN-based approaches require large datasets  and incur high computational complexity~\cite{mateos2022end}.  To address these limitations, we propose a constellation design approach that models constellation points as samples drawn from a two-dimensional (2D) density function, with \textit{Gamma}-distributed amplitudes and \textit{uniformly} distributed phases, optimized to enhance both mutual information  for communication and probability of detection for sensing. In addition, to validate the performance of the proposed method, we analytically evaluate the designed constellations by deriving the \textit{union bound} \cite{proakis2008digital} on the symbol error rate (SER) for communication and the Cram\'{e}r--Rao bound (CRB) for sensing parameter estimation.

The main contributions of this work are as follows.
\begin{itemize}
\item We propose a novel framework for geometric constellation design  by parameterizing the constellation point distribution using only two parameters—the shape and scale of a Gamma distribution that governs the amplitudes of the constellation points. This formulation significantly reduces the optimization dimensionality. Unlike PCS-based approaches, our method is purely geometric and more amenable to hardware implementation, and in contrast to NN-based methods, it does not require extensive training data or high computational complexity, thereby offering improved interpretability.

    \item 
Rather than relying on surrogate objectives,
we directly guide the design using end-task metrics--empirical mutual information for communication and probability of detection for radar--thereby ensuring improvements in the desired performance measures. Also, we adopt the Albersheim approximation for radar sensing, which enables a closed-form evaluation of the detection probability.
\item We derive analytical expressions for the union bound on the SER in communication and the CRB for radar channel parameter estimation, and demonstrate that these bounds are reasonably tight. 
These results characterize the fundamental limits of the system and provide design insights for the proposed ISAC constellation design.

\end{itemize}

The remainder of this paper is organized as follows. Section~\ref{sec:system_model} describes the system model and the ISAC channel setup. Section~\ref{sec:design_methods} presents the proposed constellation design. Section~\ref{sec:evaluation} derives the analytical formulation of the SER union bound and the CRB for the estimation of the radar channel.  Section~\ref{sec:baseline} presents the baseline method based on NN. The numerical results are provided in Section~\ref{sec:results}, followed by the conclusions in Section~\ref{sec:conclusion}. Proofs and derivations are presented in Appendices.

\paragraph*{Notation}
Throughout this paper, lowercase letters (e.g., $x$) denote scalar variables, uppercase letters (e.g., $X$) represent either constants or random variables, and bold lowercase letters (e.g., $\mathbf{x}$) indicate vectors. Moreover, $f(\cdot)$, $p(\cdot)$, and $P$, respectively,  denote a probability density function (PDF), a probability mass function (PMF),  and a probability value. \(\mathbb{R}^+\) and \(\mathbb{C}\) denote the sets of positive real numbers and complex numbers, respectively. $\mathbb{E}[\cdot]$ and $\mathrm{Var}[\cdot]$ denote the expectation and variance operators, respectively.  $X\sim\mathcal{CN}(\mu, \sigma^2)$ denote that $X$ is a complex Gaussian  random variable with mean $\mu$ and variance $\sigma^2$.

\begin{figure*}[t]
    \centering
    \includegraphics[width=1.6\columnwidth, trim=50 0 50 0, clip]{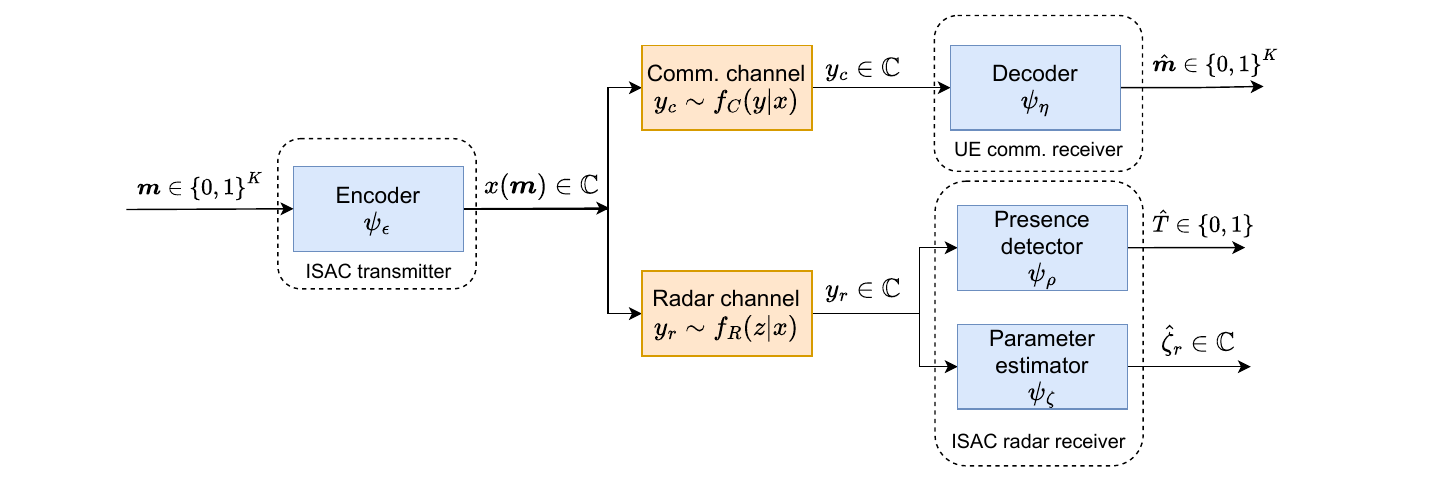}
    \caption{An ISAC system with
    one communication receiver and one target, where $\mathbf{m}$ is the message, $x(\boldsymbol{m})$ is the transmitted signal, and $y_c$ and $y_r$ are the received signals at the communication and sensing receivers, respectively.}
    \label{fig:block_diagram}
\end{figure*}


\section{System Model} \label{sec:system_model}
We consider an ISAC system in downlink with one communication receiver and one target.
Both the transmitter and the receiver are equipped with a single antenna. Figure~\ref{fig:block_diagram} shows the main components of the ISAC system in this setup.

Our goal is to develop a framework that optimally selects a subset \(\mathcal{S}\) of constellation points from a candidate set \(\bar{\mathcal{S}}\)  that satisfies the desired characteristics for both communication and sensing. The candidate set \(\bar{\mathcal{S}}\) is an \textit{amplitude phase shift keying} (APSK) structure, and its cardinality can be arbitrarily large. The rationale for selecting APSK as the candidate set is that, through appropriate selection, its subsets can realize a variety of desired constellations, including \textit{constant modulus} schemes such as PSK, constellations with large \textit{minimum distance} such as QAM, and multi-ring constellations.

We transmit a binary message vector of length $K$, $\boldsymbol{m} \in \{0,1\}^K$, using a modulation scheme defined by a subset $\mathcal{S} \subset \bar{\mathcal{S}}$ forming an $M$-ary constellation. The constellation set \( \mathcal{S} \) is defined as \( \{ x_i \in \mathbb{C} \mid i = 1, \dots, M \} \), where \( M = 2^K \). For any message \(\boldsymbol{m}\), \(x(\boldsymbol{m}) = x_i\), where \(i\) is the index corresponding to \(\boldsymbol{m}\) under the mapping scheme. A constellation symbol is also represented in polar coordinates as \( x = \rho e^{j\phi} \), where \( \rho \in \mathbb{R}^+ \) and \( \phi \in [0, 2\pi)\) denote the amplitude and phase, respectively. Similarly, for the set of constellation points, we have \(x_i = \rho_i e^{j\phi_i}\).

As shown in Fig.~\ref{fig:block_diagram}, the transmitted signal $x(\boldsymbol{m})$ propagates through the communication channel $f_C(y_c \mid x(\boldsymbol{m}))$ and is reflected back to the ISAC transmitter via the sensing channel $f_R(y_r \mid x(\boldsymbol{m}))$. The communication channel results in
\begin{align} \label{eq:communication channel}
y_c = x(\boldsymbol{m}) + n_c,
\end{align}
where \( n_c \sim \mathcal{CN}(0, \sigma^2_c) \) represents complex Gaussian noise, and the sensing channel results in 
\begin{align} \label{eq:radar channel}
y_r = \zeta_r x(\boldsymbol{m}) + n_r,
\end{align}
in which \(n_r \sim \mathcal{CN}(0,\sigma_r^2)\) denotes the channel noise, and \( \zeta_r\) denotes the complex reflection coefficient. 


\begin{figure*}[t]
    \centering

    \includegraphics[width=0.5\linewidth, trim=3cm 8cm 3cm 8cm, clip]{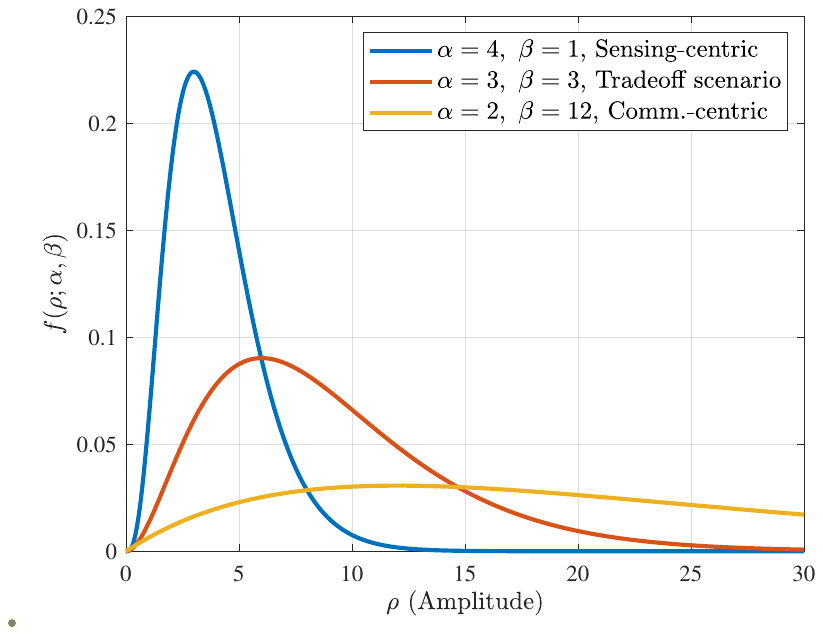}
    \hfill
    \includegraphics[width=0.48\linewidth, trim=3cm 8cm 3cm 8cm, clip]{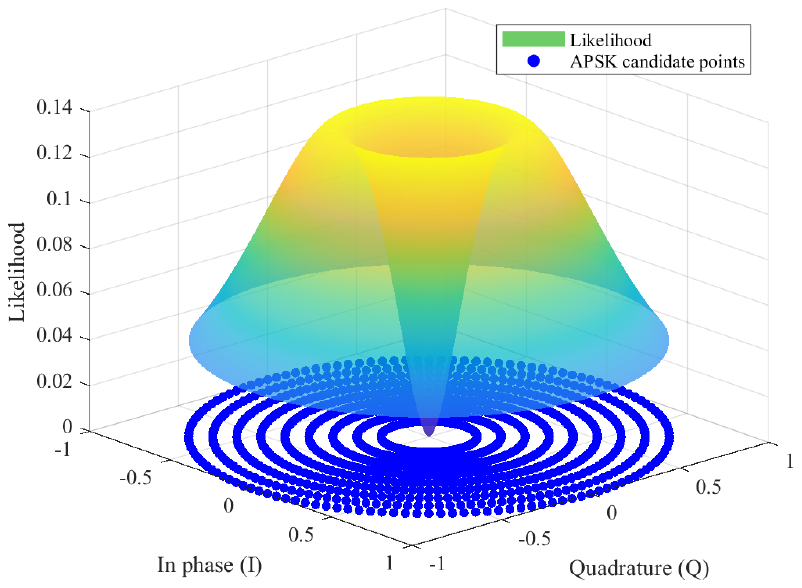}

    \par\vspace{0.2em}
    {\small(a) Gamma PDFs for different scenarios.}\hfill
    (b) APSK candidate points and an example likelihood function.

    \caption{Constellation points in the candidate set and their assigned likelihood that is Gamma-distributed in amplitude.}
    \label{fig:candidate_gamma}
\end{figure*}


\section{Proposed Constellation Design} \label{sec:design_methods}
\label{sec:Optimization problem}

From previous works on constellation design~\cite{yang2023random,du2024pcs}, it has been observed that the performance of an ISAC system is primarily influenced by how the constellation points are distributed in the constellation space. A well-performing distribution in communication-centric scenarios tends to produce a uniformly spread constellation, which effectively maximizes the minimum distance between points like QAM. In contrast, for radar-centric scenarios, such a constellation resembles a PSK-like structure. For intermediate trade-off scenarios, a multi-ring-like distribution appears to be optimal~\cite{du2024pcs}.




 In light of these observations, we model the constellation points as samples drawn from a 2D PDF in the constellation space. As a first step, an appropriate PDF must be chosen, i.e., one that is sufficiently expressive to model the desired variety of constellation structures. Since communication and sensing performance are invariant to constellation rotation \cite{proakis2008digital}, we can assume the phase component $\phi$ of constellation points to be uniformly distributed in $[0, 2\pi)$. We also propose using the Gamma distribution to model the amplitude component $\rho$. The Gamma PDF is defined as
\begin{equation} \label{eqn:Gamma_pdf}
    f_{\rm Gamma}(\rho; \alpha, \beta) = \frac{1}{\Gamma(\alpha)\beta^\alpha} \rho^{\alpha-1} e^{-\rho/\beta}, \quad \rho > 0,
\end{equation}
\noindent
where $\alpha > 0$ is the \textit{shape} parameter, $\beta > 0$ is the \textit{scale} parameter, and $\Gamma(\alpha)$ is the Gamma function. Adjusting \( \alpha \) and \( \beta \), we can obtain PDFs with desired shapes, such as in Fig.~\ref{fig:candidate_gamma}~(a).  An approximately flat PDF corresponds to a uniform-like distribution for sampling QAM-like constellations; a spike-like distribution corresponds to PSK-like constellations; and a distribution with balanced spread corresponds to multi-ring-like constellations. Therefore, the Gamma distribution is  capable of modeling the desired amplitude PDFs from which constellation points can be sampled. 
Combining a uniformly distributed phase \( \phi \) with a Gamma-distributed amplitude \( \rho \) yields a 2D likelihood distribution over APSK candidate points, each assigned a likelihood value as shown in Fig.~\ref{fig:candidate_gamma}~(b).

To avoid selecting candidate points that are overly clustered, we incorporate a regularization term in the sampling process that encourages inter-point separation among equally likely points. We define the regularization term as the sum of pairwise distances between the constellation points. The sampling process to select \( M \) constellation points, given parameters \(\alpha\) and \(\beta\), is formulated as
\begin{equation}
\begin{aligned}
\mathcal{S}(\alpha, \beta) &= \underset{\mathcal{S}_M \subseteq \bar{\mathcal{S}},\, |\mathcal{S}_M|=M}{\arg\max} \sum_{x_i \in \mathcal{S}_M} f(x_i; \alpha, \beta) \\
\text{s.t.} \quad &\sum_{x_i \in \mathcal{S}_M} \sum_{x_j \in \mathcal{S}_M} |x_i - x_j|^2 > \lambda
\end{aligned}
\label{eq:point_selection}
\end{equation}
\noindent
in which $\mathcal{S}_M$ is the subset with cardinality $M$, \( \lambda \geq 0 \) is a regularization coefficient that balances likelihood versus spatial diversity, and  
\( |x_i - x_j|^2 \) is the \textit{Euclidean distance} between the constellation points \( x_i \) and \( x_j \). Through \eqref{eq:point_selection}, we find the best $M$ constellation points from $\bar{\mathcal{S}}$ to form the selected set $\mathcal{S}(\alpha, \beta)$. Then, using the factor 
$\sqrt{\frac{M P}{\sum_{x_i \in \mathcal{S}_M} |x_i|^2}}$, we scale $\mathcal{S}_M$ to have average power $P$. 

Next, the communication and sensing performance metrics can be evaluated for the selected set.

\subsubsection{Communication Performance Metric}

For the communication task, we adopt the mutual information between the channel input \( x \) and output \( y \) as the performance metric. The mutual information \( I(X; Y_c) \) is defined as~\cite{empirical}
\begin{equation}
R = I(X; Y_c) = h(Y_c) - h(Y_c|X),
\end{equation}
in which \(h(\cdot)\) represents the differential entropy  and $X$ and $Y_c$ are random variables representing the input and output of the communication channel, respectively. The evaluation of \( h(Y_c|X) \) is straightforward. For example, in an AWGN channel, it has closed-form expression \( h(Y_c|X) = \log_2(\pi e \sigma_c^2) \)~\cite{proakis2008digital}. To evaluate the entropy \( h(Y_c) \), we use the marginal distribution of \( f_{Y_c}(y_c) \), to get
\begin{align}
h(Y) &= -\int f_{Y_c}(y_c) \log_2 f_{Y_c}(y_c) \, dy \notag \\
     &= -\int f_{Y_c}(y_c) \log_2 \Big( \sum_{x_i \in \mathcal{S}} f_C(y_c|x_i)p(x_i) \Big) dy_c \notag \\
     &= -\mathbb{E}_{Y_c} \Big[ \log_2 \sum_{x_i \in \mathcal{S}} f_C(y_c|x_i)p(x_i) \Big].
\label{eq:mutual_info}
\end{align}
Since \( f_{Y_c}(y_c) = \sum_{x_i \in \mathcal{S}} f_C(y_c|x_i)p(x_i) \) is a weighted sum of Gaussian distributions, there is no closed-form expression for \( h(Y_c) \). Hence, we approximate the expectation using Monte Carlo (MC) estimation
\begin{align}
h(Y_c) \approx -\frac{1}{N_{\text{MC}}} \sum_{i=1}^{N_{\text{MC}}} \log_2 \sum_{x_i \in \mathcal{S}} f_C(y_{c,i}|x_i)p(x_i),
\label{eq:montecarlo_entropy}
\end{align}

\noindent
where \( N_{\text{MC}} \) denotes the number of MC trials, and \( y_{c,i} \) represents the \( i \)-th realization of the received signal, assuming \( x_i \) was the transmitted constellation point, and \(p(x_i)=1/M\). The normalized mutual information is defined in~\eqref{eq:normalized_mi}, ensuring the metric lies in the range \([0, 1]\) for any \( M \) to be in same scale as sensing performance metrics, that is,
\begin{equation}
\bar{R} = \frac{R}{\log_2 M}.
\label{eq:normalized_mi}
\end{equation}

\begin{algorithm}[t]
\caption{: Optimization of Proposed Constellation Design}
\label{alg:isac_constellation}
\begin{algorithmic}
\REQUIRE $\bar{\mathcal{S}}$ (APSK), $M$, $\omega_d$, $[\alpha_{\min}, \alpha_{\max}]$, $ [\beta_{\min}, \beta_{\max}]$, $\lambda$, $P$, $N_{MC}$, $N_p$, and $N_{\rm iter}$
\ENSURE Optimized constellation set $\mathcal{S}^*$

\STATE Initialize best objective value: $\mathcal{F}_{\text{best}} \leftarrow -\infty$
\STATE Initialize positions and velocities of $N_p$ PSO particles with random $(\alpha, \beta)$ values within the specified bounds

\FOR{each iteration $j = 1$ to $N_{iter}$}
    \FOR{each particle $l = 1$ to $N_p$}
        \STATE Retrieve $(\alpha^{(l)}, \beta^{(l)})$ from particle $l$'s current position
        \STATE Select top $M$ points to form $\mathcal{S}$
       \STATE Scale $\mathcal{S}$ to have average power $P$
        \STATE Estimate normalized mutual information $\bar{R}$ via MC
        \STATE Compute average probability of detection $\bar{P}_d$ 
        \STATE Compute objective value: $\mathcal{F} = \omega_d \, \bar{P}_d  + (1 - \omega_d) \, \bar{R}$
        \IF{$\mathcal{F} > \mathcal{F}_{\text{best}}$}
            \STATE Update best solution: $\mathcal{S}^* \leftarrow \mathcal{S}$, $\mathcal{F}_{\text{best}} \leftarrow \mathcal{F}$
        \ENDIF
        \STATE Update PSO velocity and position for particle $l$
    \ENDFOR
\ENDFOR
\end{algorithmic}
\end{algorithm}

\subsubsection{Sensing Performance Metric}
For sensing performance, we use the probability of detection $P_d$ as the metric. Assuming a sine wave echo signal with amplitude $\rho$ in additive Gaussian noise, the envelope, $r = |y_r|$, at the receiver follows a Rice distribution, with PDF~\cite{skolnik2001radar}
\begin{equation}
f_R(r) = \frac{r}{\sigma^2_r} \exp\Big( \frac{-r^2 + \rho^2}{2\sigma^2_r} \Big) I_0\Big( \frac{r\rho}{\sigma^2_r} \Big),
\label{eq:rice_pdf}
\end{equation}

\noindent
where $I_0(\cdot)$ is the modified Bessel function of the first kind and zero order. The probability of detection ($P_d$) for a given threshold $V_T$ is given by

\begin{equation}
P_d = \int_{V_T}^{\infty} f_R(r) \, dr.
\label{eq:pd_exact}
\end{equation}

\begin{figure*}[t]
    \centering

    \includegraphics[width=0.59\linewidth]{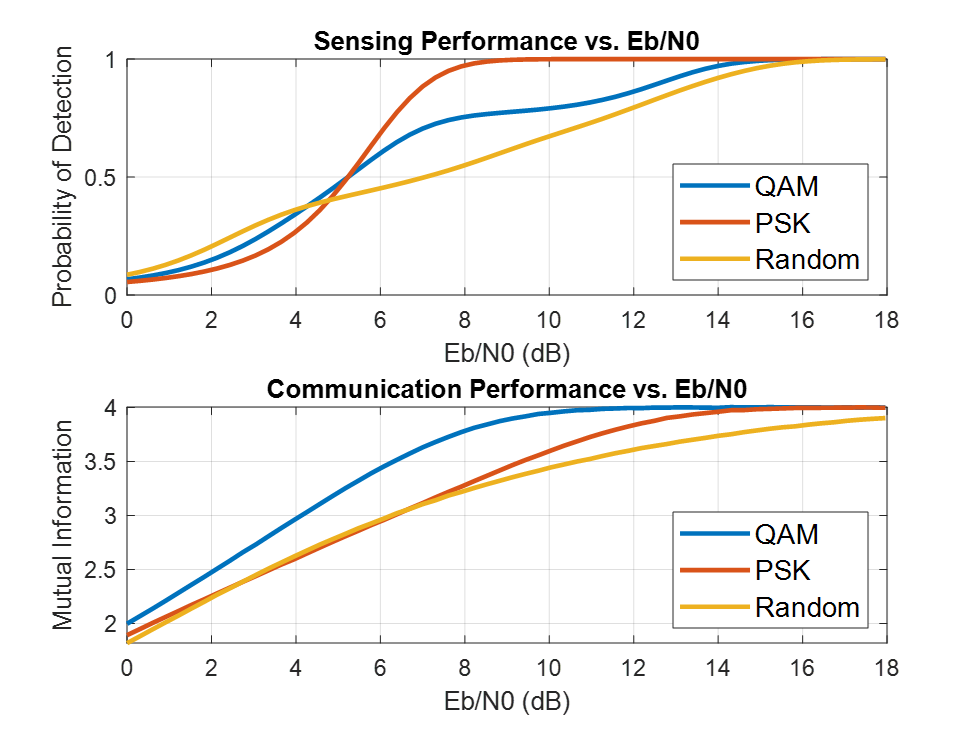}
    \hfill
    \includegraphics[width=0.4\linewidth]{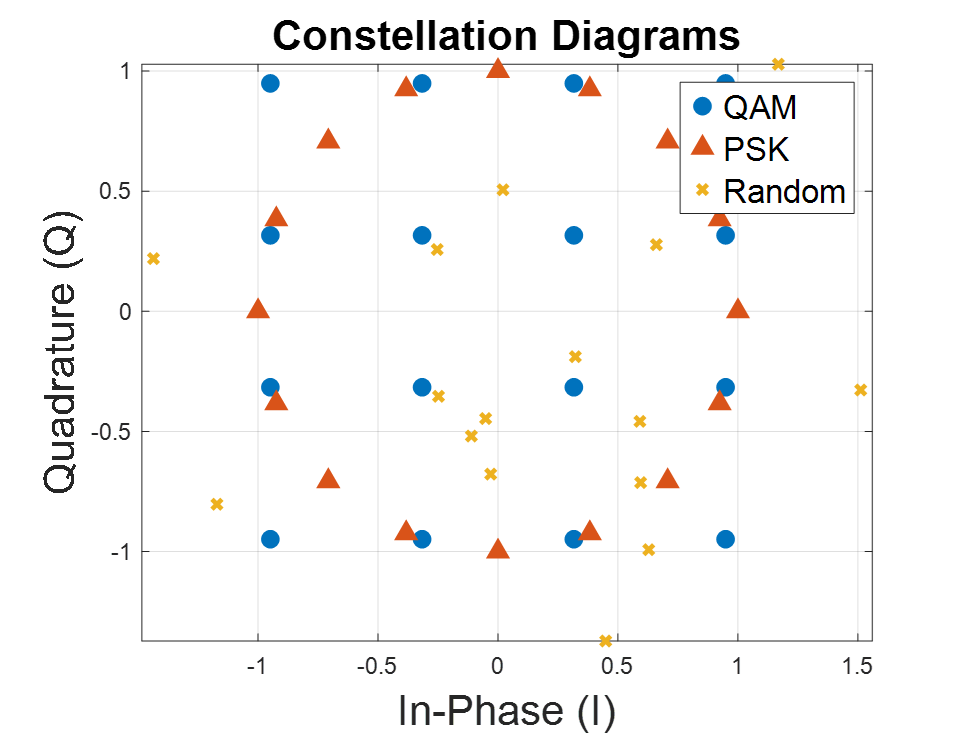}

    \par\vspace{0.2em}
    {\small(a) Probability of detection and mutual information for different constellations.\hfill
    (b) QAM, PSK, and randomly generated constellations.}

    \caption{Sensing and communication performance for three different constellations: QAM, PSK, and randomly generated.}
    \label{fig:metrics}
\end{figure*}

However, this integral cannot be evaluated in closed form. Therefore, we adopt the well-known \textit{Albersheim approximation}, which expresses $P_d$ as a function of SNR for a given false alarm probability $P_{fa}$ ,which uniqely determines the threshold $V_T$. Then, the approximate $P_d$ is given by~\cite{skolnik2001radar}
\begin{equation}
P_d = \frac{\exp( \frac{\text{SNR}_r - B}{0.12B + 1.7} )}{1 + \exp( \frac{\text{SNR}_r - B}{0.12B + 1.7} )},
\label{eq:pd_approx}
\end{equation}
where \( B = \ln\Big( \frac{0.62}{P_{fa}} \Big) \) and \(\text{SNR}_r = \frac{\rho^2}{\sigma_r^2}\). This closed-form approximation enables efficient evaluation of radar performance for a given constellation point.
We use the average probability of detection over all constellation points, defined as
\begin{align}
\bar{P}_d = \frac{1}{M} \sum_{i=1}^{M} P_{d,i},
\label{eq:avg_pd}
\end{align}
where \( P_{d,i}\) is the probability of detection corresponding to the \( i \)-th constellation point. Figure~\ref{fig:metrics} (a) shows the value of these two performance metrics for three different constellations.

\begin{rem}
APSK candidate set and Gamma distributions are chosen because they yielded the good results in ISAC. However, other choices can be made within the same framework.
\end{rem}


\subsubsection{Optimization Problem}
The final objective for the ISAC constellation optimization problem is defined as the weighted sum of the two proposed metrics: normalized mutual information and average probability of detection. The optimization is performed over the parameters of the Gamma PDF, denoted by \( \alpha \) and \( \beta \), and the objective is defined as
\begin{equation}
\begin{aligned}
\alpha^*, \beta^* = \argmax_{\alpha, \beta} \quad & \omega_d \, \bar{P}_d(\alpha, \beta)  + (1 - \omega_d) \, \bar{R}(\alpha, \beta)\\
\text{subject to} \quad & \alpha \in [\alpha_{\min}, \, \alpha_{\max}], \\
                        & \beta \in [\beta_{\min}, \, \beta_{\max}]
\end{aligned}
\label{eq:isac_objective}
\end{equation}
where \( \omega_d \in [0, 1] \) is a weight factor that controls the trade-off between communication and sensing performance. The optimization problem in~\eqref{eq:isac_objective} involves non-differentiable components, such as empirical mutual information, which prevent the use of gradient-based methods. To address this, we adopt \textit{particle swarm optimization} (PSO) which is well-suited for optimizing complex, non-smooth objective functions. In the PSO algorithm, each particle represents a candidate solution and is characterized by a position and a velocity in the search space. At each iteration, the particle updates its velocity based on the best-found positions and the dynamics of movement~\cite{kennedy1995pso}. The particle's position is then updated by adding the new velocity to its current position. The pseudo-code of the algorithm is provided in Algorithm~\ref{alg:isac_constellation}.

\section{Analytical Performance Evaluations} \label{sec:evaluation}

In this section, we evaluate the performance of the proposed Gamma-distributed constellation $\mathcal{S}$ using two metrics: (i) the \textit{union bound} \cite{proakis2008digital} on the SER for communication, and (ii) the CRB for sensing.


\subsection{Communication Performance Limit: SER Union Bound}
\label{sec:UB}

The communication channel in \eqref{eq:communication channel} is AWGN, and the receiver employs \textit{maximum-likelihood} detection as
\begin{equation}\label{eq:detection}
\hat{x} = \arg\min_{x_k \in \mathcal{S}}|y_c - x_k|^2.
\end{equation}
For two fixed constellation points $x_i$ and $x_j$, the conditional pairwise error probability (PEP) conditioned on the  transmit constellation  can be written as~\cite{proakis2008digital}
\begin{equation}\label{eq:pep}
P_{ij} = P(x_i \to x_j) = Q\!\left(\sqrt{\frac{D}{2\sigma_c^2}}\right),
\end{equation}
where $D = |d_{ij}|^2 \triangleq |x_j - x_i|^2$ and $Q(\cdot)$ is the Gaussian $Q$-function. Next, we analytically derive the average PEP over all realizations of the proposed constellation points.

 In order to derive a closed form expression for the average PEP, the PDF of $D$ in \eqref{eq:pep} must be analytically quantified. A natural first approach is to approximate the PDF of $D$ by a single Gamma distribution. Its shape ($\hat \alpha$) and scale ($\hat \beta$) parameters can be derived by using the methods-of-moments \cite{Kay1993} as  
\begin{subequations}
\begin{align}
\hat{\alpha} &= \frac{\mathbb{E}[D]^2}{\mathrm{Var}[D]} 
= \frac{2\alpha(\alpha + 1)}{(\alpha + 2)(\alpha + 3)}, \\
\hat{\beta} &= \frac{\mathrm{Var}(D)}{\mathbb{E}(D)} 
= (\alpha + 2)(\alpha + 3)\beta^2.
\end{align}
\end{subequations}
where $\hat \alpha$ and $\hat \beta$ are functions of the shape ($\alpha$) and scale ($\beta$) parameters of the Gamma distribution used to model the constellation points. 

However, while this method-of-moments approach enables a closed-form derivation of the mean and variance of $D$, it fails to capture the higher-order structure of the true distribution, particularly in the tails and near the origin, where the Q-function is most sensitive.
As a result, the SER predicted from the single-Gamma approximation can deviate significantly from the MC ground truth. 
To overcome this limitation,  we approximate the distribution of $D$ by an $L$-component mixture of Gamma distributions as  
\begin{eqnarray}\label{eqn:mixture_pdf0}
f_D(d) &\approx& \sum_{\ell=1}^{L} \omega_\ell\, f_{\mathrm{Gamma}}(d;\, \alpha_\ell, \beta_\ell) \nonumber \\ 
&=& \sum_{\ell=1}^{L} \omega_\ell \frac{d^{\alpha_\ell - 1}\,e^{-d/\beta_\ell}}{\Gamma(\alpha_\ell)\,\beta_\ell^{\alpha_\ell}}, \quad d > 0,
\end{eqnarray}
in which $L$ is the number of components. Here,  $w_\ell$, $\alpha_\ell$, and $\beta_\ell$, respectively,  are the weight, shape, and scale parameters of the $\ell$th component of the Gamma mixture,  and $\omega_\ell \geq 0$ with $\sum_{\ell=1}^{L} \omega_\ell = 1$. Then, we  present the following lemma.  
\begin{lem}
The average PEP over all realization of the transmit constellation points can be evaluated as 
\begin{equation}\label{eq:ser_gamma}
\bar{P}_{ij} \approx \frac{1}{2} \sum_{\ell=1}^{L}  w_\ell\,I_{\frac{1}{1+\gamma_\ell}}\!\left(\frac{1}{2},\,\alpha_\ell\right),
\end{equation}
in $w_\ell$, $\alpha_\ell$,  $\beta_\ell$, and $\gamma_\ell = {\beta_\ell}/{(4\sigma_c^2)}$ are the  parameters of the $\ell$th component of the Gamma mixture,  and $I_x(a,b)$ denotes the regularized incomplete beta function \cite[Eqs. (5.73)]{SimonDigital2004}. 
\end{lem}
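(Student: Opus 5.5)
The plan is to average the conditional PEP in \eqref{eq:pep} over the distribution of $D$, using the Gamma-mixture approximation \eqref{eqn:mixture_pdf0}. By linearity,
\begin{equation*}
\bar{P}_{ij} = \int_0^\infty Q\!\left(\sqrt{\tfrac{d}{2\sigma_c^2}}\right) f_D(d)\,\mathrm{d}d \approx \sum_{\ell=1}^{L} \omega_\ell \int_0^\infty Q\!\left(\sqrt{\tfrac{d}{2\sigma_c^2}}\right) \frac{d^{\alpha_\ell-1}e^{-d/\beta_\ell}}{\Gamma(\alpha_\ell)\beta_\ell^{\alpha_\ell}}\,\mathrm{d}d .
\end{equation*}
The claim therefore reduces to a single-component identity: for $D\sim\mathrm{Gamma}(a,b)$, $\mathbb{E}[Q(\sqrt{D/(2\sigma_c^2)})] = \tfrac12 I_{1/(1+\gamma)}(\tfrac12,a)$ with $\gamma=b/(4\sigma_c^2)$. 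This is the classical average error probability over Nakagami-type fading, so one could quote \cite{SimonDigital2004}. I would nevertheless derive it directly.

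For each component, I would use Craig's representation $Q(x)=\frac1\pi\int_0^{\pi/2}\exp(-x^2/(2\sin^2\theta))\,\mathrm{d}\theta$ and exchange the integrals (Tonelli applies, since everything is nonnegative). The inner integral over $d$ is the Gamma moment generating function evaluated at $-1/(4\sigma_c^2\sin^2\theta)$, which gives
\begin{equation*}
\int_0^\infty e^{-d/(4\sigma_c^2\sin^2\theta)} f_{\rm Gamma}(d;a,b)\,\mathrm{d}d = \left(1+\frac{\gamma}{\sin^2\theta}\right)^{-a} = \left(\frac{\sin^2\theta}{\sin^2\theta+\gamma}\right)^{a}.
\end{equation*}
This step is where $\gamma_\ell=\beta_\ell/(4\sigma_c^2)$ appears naturally. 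We are left with $\frac1\pi\int_0^{\pi/2}\bigl(\sin^2\theta/(\sin^2\theta+\gamma)\bigr)^{a}\mathrm{d}\theta$.

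The main obstacle is converting this finite-range trigonometric integral into the regularized incomplete beta form. I would first try the substitution $t=\sin^2\theta$, under which $\mathrm{d}\theta = \mathrm{d}t/(2\sqrt{t(1-t)})$. This produces $\frac{1}{2\pi}\int_0^1 t^{a-1/2}(1-t)^{-1/2}(t+\gamma)^{-a}\,\mathrm{d}t$, which is a Gauss hypergeometric function. If that route becomes messy, I would switch to a more direct alternative: write $Q(\sqrt{d/(2\sigma_c^2)})=\frac12\Pr(\chi^2_1> d/(2\sigma_c^2))$ and average over $d$. That turns the problem into computing $\Pr(Z>D)$ with $Z=2\sigma_c^2\chi_1^2$, a Gamma variable with shape $1/2$ and scale $4\sigma_c^2$, and $D\sim\mathrm{Gamma}(a,b)$ independent. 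By the standard ratio property of Gamma variables, $\frac{Z/(4\sigma_c^2)}{Z/(4\sigma_c^2)+D/b}\sim\mathrm{Beta}(\tfrac12,a)$. The event $Z>D$ is equivalent to $\frac{Z/(4\sigma_c^2)}{Z/(4\sigma_c^2)+D/b} > \frac{1}{1+\gamma}$ after rescaling, which yields the upper-tail incomplete beta.

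The final step is to reconcile which tail appears, using $I_x(p,q)=1-I_{1-x}(q,p)$. The goal is to match the stated argument $1/(1+\gamma_\ell)$ together with the parameter ordering $(\tfrac12,\alpha_\ell)$. This bookkeeping is where I expect sign and orientation errors, so I will check it against the limiting cases: $\gamma_\ell\to0$ must give $\bar{P}_{ij}\to\tfrac12$, and $\gamma_\ell\to\infty$ must give $\bar{P}_{ij}\to0$. Summing the per-component results with weights $\omega_\ell$ then gives \eqref{eq:ser_gamma}. The ``$\approx$'' is inherited solely from the mixture approximation of $f_D$.
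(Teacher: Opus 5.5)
Up to the trigonometric integral, your reduction is the same as the paper's. You expand over the Gamma mixture, use Craig's form, swap integrals, and evaluate the Gamma MGF to reach $\frac{1}{\pi}\int_0^{\pi/2}\bigl(\sin^2\theta/(\sin^2\theta+\gamma_\ell)\bigr)^{\alpha_\ell}d\theta$. The paper then simply quotes \cite{SimonDigital2004} for this integral. Your Gamma--Beta argument is a genuinely different, self-contained finish that even bypasses Craig's formula. It is sound except for one slip. With $G_1=Z/(4\sigma_c^2)$ and $G_2=D/b$, the event $Z>D$ is $G_1>\gamma G_2$, i.e.\ $U=G_1/(G_1+G_2)>\gamma/(1+\gamma)$, not $U>1/(1+\gamma)$. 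The version you wrote gives $0$ rather than $\tfrac12$ as $\gamma\to 0$, so your own check would catch it.

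The genuine problem is your last step, which cannot succeed. Done correctly, your argument gives $\tfrac12\bigl[1-I_{\gamma/(1+\gamma)}(\tfrac12,a)\bigr]=\tfrac12 I_{1/(1+\gamma)}(a,\tfrac12)$. Equivalently, $V=G_2/(G_1+G_2)\sim\mathrm{Beta}(a,\tfrac12)$ and $Z>D\iff V<1/(1+\gamma)$. The Beta parameters therefore come out in the order $(\alpha_\ell,\tfrac12)$. No use of $I_x(p,q)=1-I_{1-x}(q,p)$ turns this into the printed $\tfrac12 I_{1/(1+\gamma_\ell)}(\tfrac12,\alpha_\ell)$; the two are different functions unless $\alpha_\ell=\tfrac12$.

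Your limit checks cannot expose this, because both expressions tend to $\tfrac12$ as $\gamma\to 0$ and to $0$ as $\gamma\to\infty$. A check that does discriminate is $a=1$. There the integral equals $\tfrac12\bigl(1-\sqrt{\gamma/(1+\gamma)}\bigr)=\tfrac12 I_{1/(1+\gamma)}(1,\tfrac12)$, whereas $\tfrac12 I_{1/(1+\gamma)}(\tfrac12,1)=1/(2\sqrt{1+\gamma})$; at $\gamma=3$ these are $0.067$ versus $0.25$. Likewise, the true value decays as $\gamma^{-\alpha_\ell}$ at high SNR, while the printed one decays as $\gamma^{-1/2}$ for every $\alpha_\ell$.

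So, with $I_x(a,b)=B(x;a,b)/B(a,b)$ as the paper defines it, the lemma's closed form has its Beta parameters transposed. The paper's integral $\mathcal{I}(\alpha_\ell,\gamma_\ell)$ is itself correct, so the transposition can only enter at the quoted closed form for it. What your method actually proves is $\bar{P}_{ij}\approx\tfrac12\sum_{\ell}\omega_\ell\, I_{1/(1+\gamma_\ell)}(\alpha_\ell,\tfrac12)$. You should state that result, rather than doing bookkeeping toward the printed ordering.
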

\begin{IEEEproof}
 See Appendix \ref{app:AppendixA}.
\end{IEEEproof}

The parameters of the Gamma mixture  $\{w_\ell, \alpha_\ell, \beta_\ell\}_{\ell=1}^{L}$ are fitted by using the expectation-maximization (EM) algorithm \cite{Bishop2006}. The details of the parameter fitting through the EM algorithm is presented  in  Appendix \ref{app:AppendixB}. 

Next, for an $M$-ary constellation with  independent and identically distributed  points, every PEP is identical. Then, the \textit{union upper bound}\cite{proakis2008digital} on the average SER is obtained as follows:
\begin{equation}\label{eq:ser_ub_fixed}
\bar{P} \leq (M-1) \bar{P}_{ij}.
\end{equation}

\subsection{Sensing Performance Limit:   Channel CRB}
\label{sec:crb}


For a given transmitted signal $x_i$, the ISAC sensing receiver estimates $\zeta_r$,  as illustrated in Fig.~\ref{fig:block_diagram} using block $\psi_\zeta$. In the following, we develop an estimator for the complex reflection coefficient along with its CRB.

\begin{lem}
A \textit{maximum likelihood estimator (MLE)} of $\zeta_r$ is given by
\begin{equation}\label{eq:mle}
\hat{\zeta}_r = \frac{y_r\,x_i^*}{\rho_i^2}.
\end{equation}
Moreover, $\hat{\zeta}_r$ is an efficient \textit{minimum variance unbiased estimator} (MVUE).
\end{lem}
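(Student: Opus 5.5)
We model the single-snapshot sensing observation \eqref{eq:radar channel} with $x_i$ known at the ISAC receiver. Then $y_r \sim \mathcal{CN}(\zeta_r x_i, \sigma_r^2)$, and we treat $\zeta_r\in\mathbb{C}$ as a deterministic unknown. The plan has three steps: maximize the complex Gaussian likelihood directly, check unbiasedness and compute the variance, and compare that variance with the Fisher information to establish efficiency. The MVUE claim then follows from the CRB.

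\textbf{MLE.} The likelihood is
\begin{equation*}
f(y_r;\zeta_r)=\frac{1}{\pi\sigma_r^2}\exp\!\left(-\frac{|y_r-\zeta_r x_i|^2}{\sigma_r^2}\right),
\end{equation*}
so maximizing it is equivalent to minimizing $|y_r-\zeta_r x_i|^2$ over $\zeta_r\in\mathbb{C}$. This is a one-dimensional complex least-squares problem. Setting the Wirtinger derivative with respect to $\zeta_r^*$ to zero gives $-x_i^*(y_r-\zeta_r x_i)=0$, hence $\hat\zeta_r = y_r x_i^*/|x_i|^2 = y_r x_i^*/\rho_i^2$, provided $\rho_i>0$. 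Alternatively, $\zeta_r x_i$ ranges over all of $\mathbb{C}$, so the minimum value $0$ is attained exactly at $\zeta_r=y_r/x_i$, which is the same expression. The objective is convex in $\zeta_r$, so this stationary point is the global maximizer of the likelihood.

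\textbf{Unbiasedness and variance.} Substituting $y_r=\zeta_r x_i+n_r$ gives $\hat\zeta_r=\zeta_r + n_r x_i^*/\rho_i^2$. Therefore $\mathbb{E}[\hat\zeta_r]=\zeta_r$, and $\mathbb{E}|\hat\zeta_r-\zeta_r|^2=\sigma_r^2\rho_i^2/\rho_i^4=\sigma_r^2/\rho_i^2$.

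\textbf{CRB and efficiency.} Write $\zeta_r=a+jb$ and compute the $2\times2$ real Fisher information matrix for a complex Gaussian with mean $\mu(\theta)=\zeta_r x_i$ and fixed variance. The standard formula $\mathbf{J}=\frac{2}{\sigma_r^2}\mathrm{Re}\{\nabla\mu^H\nabla\mu\}$ applies with $\partial\mu/\partial a=x_i$ and $\partial\mu/\partial b=jx_i$. This gives $\mathbf{J}=\frac{2\rho_i^2}{\sigma_r^2}\mathbf{I}_2$, because the off-diagonal term $\mathrm{Re}\{x_i^*\,jx_i\}$ vanishes. The CRB on the total error variance is then $\mathrm{tr}(\mathbf{J}^{-1})=\sigma_r^2/\rho_i^2$, which equals the variance computed above, so $\hat\zeta_r$ attains the bound.

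A cleaner route to efficiency is the factorization criterion. The score is
\begin{equation*}
\partial \ln f/\partial \zeta_r^* = \frac{\rho_i^2}{\sigma_r^2}\left(\hat\zeta_r-\zeta_r\right),
\end{equation*}
which is exactly the form $\mathcal{I}(\zeta_r)(g(y_r)-\zeta_r)$ in the equality condition of the CRB \cite{Kay1993}. This shows directly that $\hat\zeta_r$ is efficient. An unbiased estimator attaining the CRB has minimum variance among all unbiased estimators, so $\hat\zeta_r$ is the MVUE.

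The only delicate step is the complex-parameter bookkeeping: using either the real $2\times2$ Fisher information or the complex Fisher information $\rho_i^2/\sigma_r^2$ consistently, so that the factor of $2$ between the real and complex conventions does not produce a spurious mismatch.
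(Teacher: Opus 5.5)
Your proposal is correct and follows essentially the same route as the paper's Appendix~C. Like the paper, you minimize $|y_r-\zeta_r x_i|^2$ to obtain the MLE, substitute the observation model to get unbiasedness and variance $\sigma_r^2/\rho_i^2$, and compute the real Fisher information $\frac{2\rho_i^2}{\sigma_r^2}\mathbf{I}_2$ for $[\zeta_I,\zeta_Q]^T$, whose inverse has trace equal to that variance. Your extra score-factorization argument is a valid shortcut to efficiency that the paper does not use, but it supplements the same proof rather than replacing it.
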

\begin{IEEEproof}
 See Appendix \ref{app:AppendixC}.
\end{IEEEproof}

Next, by defining a parameter vector $\boldsymbol{\eta} = [\zeta_I, \zeta_Q]^T$, where  $\zeta_I$ and $\zeta_Q$ are the in-phase and quadrature-phase components  of $\ {\zeta}_r$, the conditional Fisher information matrix for  $\boldsymbol{\eta}$ can be derived.
Then, the conditional CRB for the amplitude of the channel parameter $\zeta_r$ is (see Appendix \ref{app:AppendixC})
\begin{equation}\label{eq:crb_cond}
\mathrm{CRB}(\zeta_r|x_i) = \frac{\sigma_r^2}{\rho_i^2}.
\end{equation}
Finally, the average CRB over the probability distribution of the proposed constellation points $x_i$ is given in the following Lemma.

\begin{lem}
The average CRB for estimating the complex reflection coefficient $\zeta_r$ at the ISAC receiver is given by
\begin{equation}\label{eq:avg_crb}
\overline{\mathrm{CRB}}(\zeta_r) = \frac{\sigma_r^2}{\beta^2(\alpha-1)(\alpha-2)}, \quad \alpha > 2.
\end{equation}
\end{lem}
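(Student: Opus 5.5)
The plan is to average the conditional bound \eqref{eq:crb_cond} over the amplitude model used to generate the constellation. Treat the transmitted point as $x_i=\rho_i e^{j\phi_i}$ with $\rho_i$ distributed according to the Gamma law \eqref{eqn:Gamma_pdf} with parameters $(\alpha,\beta)$ and independent uniform phase. Since $\mathrm{CRB}(\zeta_r\mid x_i)=\sigma_r^2/\rho_i^2$ depends only on the amplitude, the phase integrates out trivially. We are left with
\begin{equation*}
\overline{\mathrm{CRB}}(\zeta_r)=\sigma_r^2\,\mathbb{E}\!\left[\rho^{-2}\right]
=\frac{\sigma_r^2}{\Gamma(\alpha)\beta^\alpha}\int_0^\infty \rho^{\alpha-3}e^{-\rho/\beta}\,d\rho .
\end{equation*}

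The key step is evaluating this negative moment of the Gamma distribution. Substituting $t=\rho/\beta$ turns the integral into $\beta^{\alpha-2}\Gamma(\alpha-2)$. This substitution is valid, and the integral converges at the origin, exactly when $\alpha-2>0$; this is the origin of the condition $\alpha>2$ in \eqref{eq:avg_crb}. The result is $\mathbb{E}[\rho^{-2}]=\Gamma(\alpha-2)/(\beta^2\Gamma(\alpha))$. Applying $\Gamma(\alpha)=(\alpha-1)(\alpha-2)\Gamma(\alpha-2)$ then gives $1/(\beta^2(\alpha-1)(\alpha-2))$, and multiplying by $\sigma_r^2$ yields the stated formula.

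The main point requiring care is not the computation but the modeling step: the actual constellation is a deterministic subset of the APSK grid, later power-normalized. The lemma must therefore be read as an average over the continuous Gamma amplitude model, with $\beta$ the scale after any normalization, just as the paper does for the PEP in the previous lemma. I would state this interpretation explicitly, and also remark that for $\alpha\le 2$ the expectation diverges because of the mass near $\rho=0$. The remaining steps are routine.
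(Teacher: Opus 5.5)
Your proposal is correct and matches the paper's proof in Appendix C. The paper also drops the phase, reduces the average CRB to $\sigma_r^2\,\mathbb{E}[\rho_i^{-2}]$, writes this as $\Gamma(\alpha-2)\beta^{\alpha-2}/(\Gamma(\alpha)\beta^\alpha)$, simplifies with the Gamma recurrence to get $1/(\beta^2(\alpha-1)(\alpha-2))$, and closes with a remark that the integral diverges for $\alpha\le 2$.
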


\begin{IEEEproof}
 See Appendix \ref{app:AppendixC}.
\end{IEEEproof}
    
The average CRB links the proposed constellation design parameters $\alpha$ and $\beta$ to sensing performance, highlighting the importance of the proposed Gamma-based design.
Under a constellation average power constraint $\mathbb{E}[\rho^2] = P$, the parameters of the Gamma distribution satisfy
\begin{equation}
\mathbb{E}[\rho^2] = \beta^2 \alpha(\alpha+1) = P,
\end{equation}
which yields $\beta^2 = \frac{P}{\alpha(\alpha+1)}.$ 
Plugging this into \eqref{eq:avg_crb}, we obtain
\begin{equation}
\overline{\mathrm{CRB}}(\zeta_r) = \frac{\sigma_r^2}{P}
\cdot
\frac{\alpha(\alpha+1)}{(\alpha-1)(\alpha-2)}, \quad \alpha > 2.
\end{equation}
This result indicates that, under a fixed average power constraint (or equivalently, for a fixed sensing channel  ${\rm SNR}_r =P/\sigma_r^2$), the average CRB depends only on the shape parameter $\alpha$, indicating that the variability of the constellation amplitudes governs sensing performance.
The function is monotonically decreasing for $\alpha > 2$, approaching $1$ as $\alpha \to \infty$ and diverging as $\alpha \to 2^+$, indicating that larger $\alpha$ leads to improved sensing performance, while values near $2$ result in degradation.

\section{Baseline Method: NN-based Constellations} \label{sec:baseline}
To provide comparison and intuition for our proposed method, we first present a data-driven approach, which uses end-to-end learning for the constellation design problem~\cite{mateos2022end}. The communication part adopts the architecture of an AE, while the radar part is trained in a supervised manner. The blue blocks (\(\psi_\epsilon, \psi_\rho, \psi_\eta\)) in Fig.~\ref{fig:block_diagram} represent an NN that maps the input to the output using trainable parameters. Table~\ref{tab:network_architecture} shows the number of neurons in the hidden  layers parametrized by  $U$ and the activation functions for each NN. 
\subsubsection{Communication NN}
For the communication receiver, we consider an autoencoder architecture and use the cross-entropy loss function as
\begin{align} \label{eq:cross entropy}
\mathcal{L}_{c} = -\mathbb{E}\Big[\sum_{j=1}^{K} \boldsymbol{m}_j \log(\hat{\boldsymbol{m}}_j)\Big],
\end{align}
where \( \boldsymbol{m}_j \) is the actual \( j \)th bit, and \( \hat{\boldsymbol{m}}_j \) is the corresponding \textit{likelihood} estimated by the NN, \( \psi_\eta \). 
\subsubsection{Sensing NN}
We use a supervised approach for sensing and define the loss functions for detection. We assume that the target is present ($T=1$) or absent ($T=0$) with a probability of \( p(T=1) = 0.5 \). The loss function for target detection is a \emph{binary cross-entropy} loss function given by
\begin{align} \label{eq:cross entropy}
\mathcal{L}_{r} = -\mathbb{E}\left[ T \log(\hat{T}) + (1 - T) \log(1 - \hat{T}) \right],
\end{align}
in which, \(\hat{T}\) represents the NN estimation by $\psi_p$. 
\subsubsection{ISAC system}
For the ISAC system, the overall loss function is the weighted sum of the communication and radar loss functions as
\begin{align}
\mathcal{L}_{\rm ISAC} = \omega_n \mathcal{L}_r + (1 - \omega_n) \mathcal{L}_c,
\end{align}
where \( \omega_n \in [0,1]\) is a hyper-parameter that defines the relative importance of communication versus radar; values of \(\omega_n\) close to 1 correspond to radar-centric scenarios, while values close to 0 correspond to communication-centric scenarios.

\begin{table}[t]
    \centering
    \caption{NN architecture details.}
    \label{tab:network_architecture}
    \renewcommand{\arraystretch}{1}
    \setlength{\tabcolsep}{8pt}
    \begin{tabular}{lccc}
        \toprule
        \textbf{Network} & \textbf{Input} & \textbf{Hidden Layers} & \textbf{Output} \\
        \midrule
        Encoder $\psi_\varepsilon$         & $K$ & $(U,\,2U,\,U)$ & $2$ (linear) \\
        Comm. Receiver $\psi_\eta$         & $2$ & $(U,\,2U,\,U)$ & $K$ (softmax) \\
        Presence Detector $\psi_p$         & $2$ & $(U,\,2U,\,U)$ & $1$ (sigmoid) \\
        \bottomrule
    \end{tabular}
\end{table}

\section{Numerical Results} \label{sec:results}
We present numerical results for the proposed framework. The message length is set to $K = 4$, yielding a constellation size of $M = 16$, and we use ${\rm SNR} = \frac{E_b}{N_0} \cdot \log_2 M$ \cite{proakis2008digital}. For the simulation of the NN baseline, the transmitted signal is normalized such that \( \mathbb{E}\left[ |x|^2 \right] = 1 \). The communication SNR is defined as \( \mathrm{SNR}_c = \ \mathbb{E}\left[ |x|^2 \right]/\sigma^2_c \), and the radar SNR is defined as \( \mathrm{SNR}_r = \mathbb{E}\left[\zeta_r^2 |x|^2 \right] / \sigma^2_r \). All simulation are done for \(\zeta_r=1\) as any other values only changes the \(\mathrm{SNR}_r\).

For training, we set the parameter of hidden layer size to \( U = 16 \) and use the rectified linear unit (ReLU) activation function between hidden layers. The model is trained using the Adam optimizer with a learning rate of 0.001 over 500 epochs, using \( 10^6 \) samples per epoch and a batch size of \( 10^5 \). All NNs are trained simultaneously. Figure~\ref{fig:constellations_all} (a) shows the constellation points designed by the NN for different values of \(\omega_n\).

For the simulation of the proposed constellation design, we set the regularization weight to \( \lambda = 0.05 \), and the number of MC samples to \( N_{MC} = 1000 \). The parameter bounds for the Gamma PDF are defined as  \([\alpha_{\min}, \alpha_{\max}] = [2, 5] \) and \( [\beta_{\min}, \beta_{\max}] = [1, 20] \). The number of PSO particles is set to \( N_p = 5 \), and the maximum number of optimization iterations is set to \( N_{\rm iter} = 1000 \). Figure~\ref{fig:constellations_all} (b) illustrates the designed constellations for different values of the trade-off factor \(\omega_d\). The probability of detection is computed using a Neyman–Pearson detector ~\cite{richards2022radar}, with \( P_{\mathrm{fa}} = 10^{-3} \).

\begin{figure*}[t]
	\centering
	\includegraphics[width=0.78\linewidth]{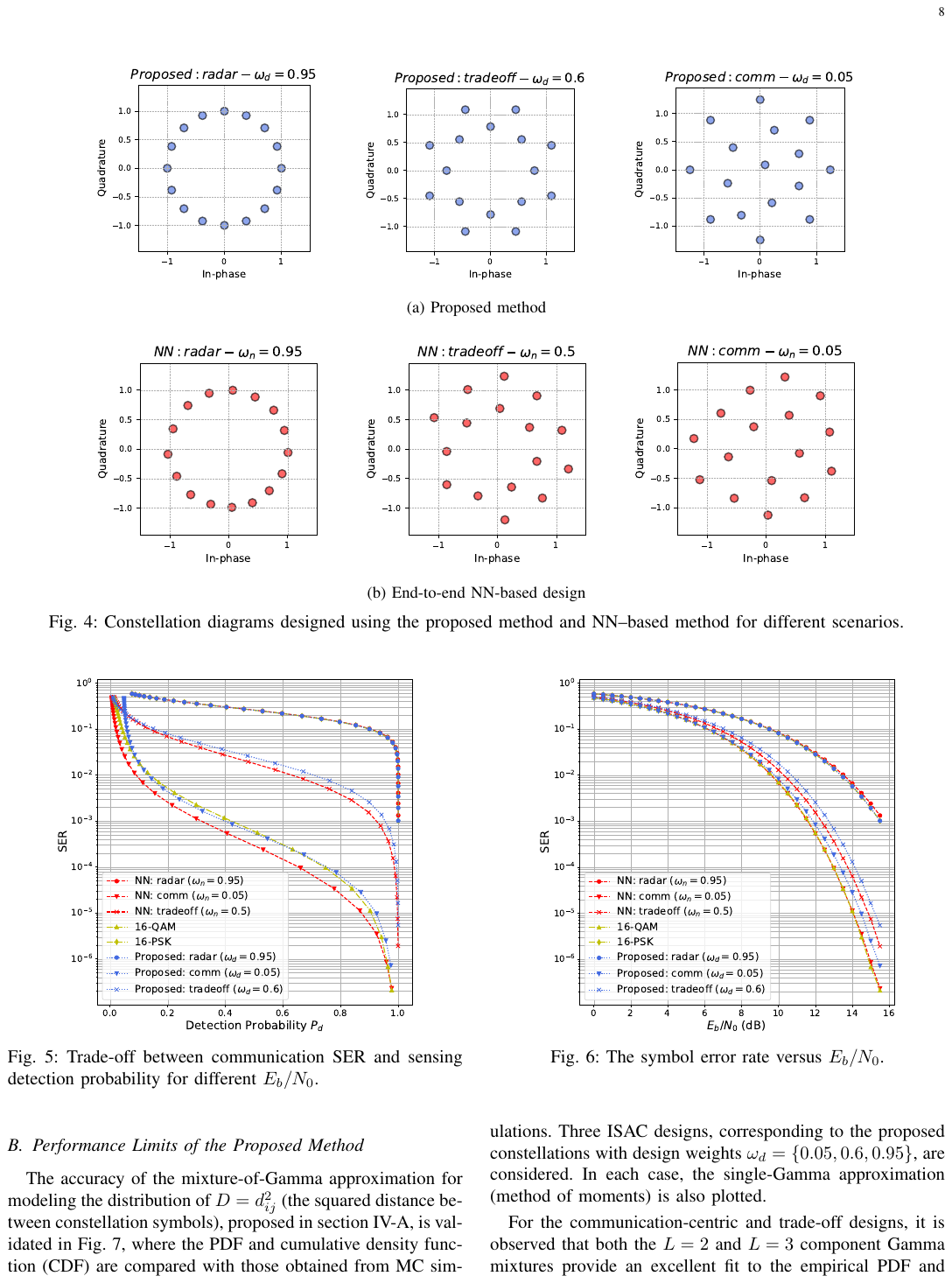}\hspace{0.03\linewidth}
	\caption{Constellation diagrams designed using the proposed method and NN–based method for different scenarios.}
	\label{fig:constellations_all}
\end{figure*}

 We set \(\alpha_{\text{min}} = 2\) to avoid placing constellation points close to the origin. The range \([\beta_{\text{min}}, \beta_{\text{max}}]\) is chosen wide enough to capture both narrow (PSK-like) and wide (QAM-like) distributions. Other hyper-parameters are selected through trial and error. Optimal $\alpha$ and $\beta$ for each $\omega_d$ are given in Table~\ref{tab:gamma_params}.  

\begin{table}[t]
    \centering
    \caption{Optimal values of the Gamma-distributed constellation parameters for different scenarios.}
    \label{tab:gamma_params}
    \renewcommand{\arraystretch}{1.15}
    \setlength{\tabcolsep}{10pt}
    \begin{tabular}{lccc}
        \toprule
        \textbf{Scenario} & $\omega_d$ & $\alpha^*$ & $\beta^*$ \\
        \midrule
        Radar-centric         & 0.95 &  4.71    &  1.00 \\
        Trade-off             & 0.60 & 3.36   &   1.00 \\
        Comm.-centric & 0.05 & 2.12 &  18.13  \\
        \bottomrule
    \end{tabular}
\end{table}

\begin{figure}[t]
    \centering
    \includegraphics[width=0.8\linewidth]{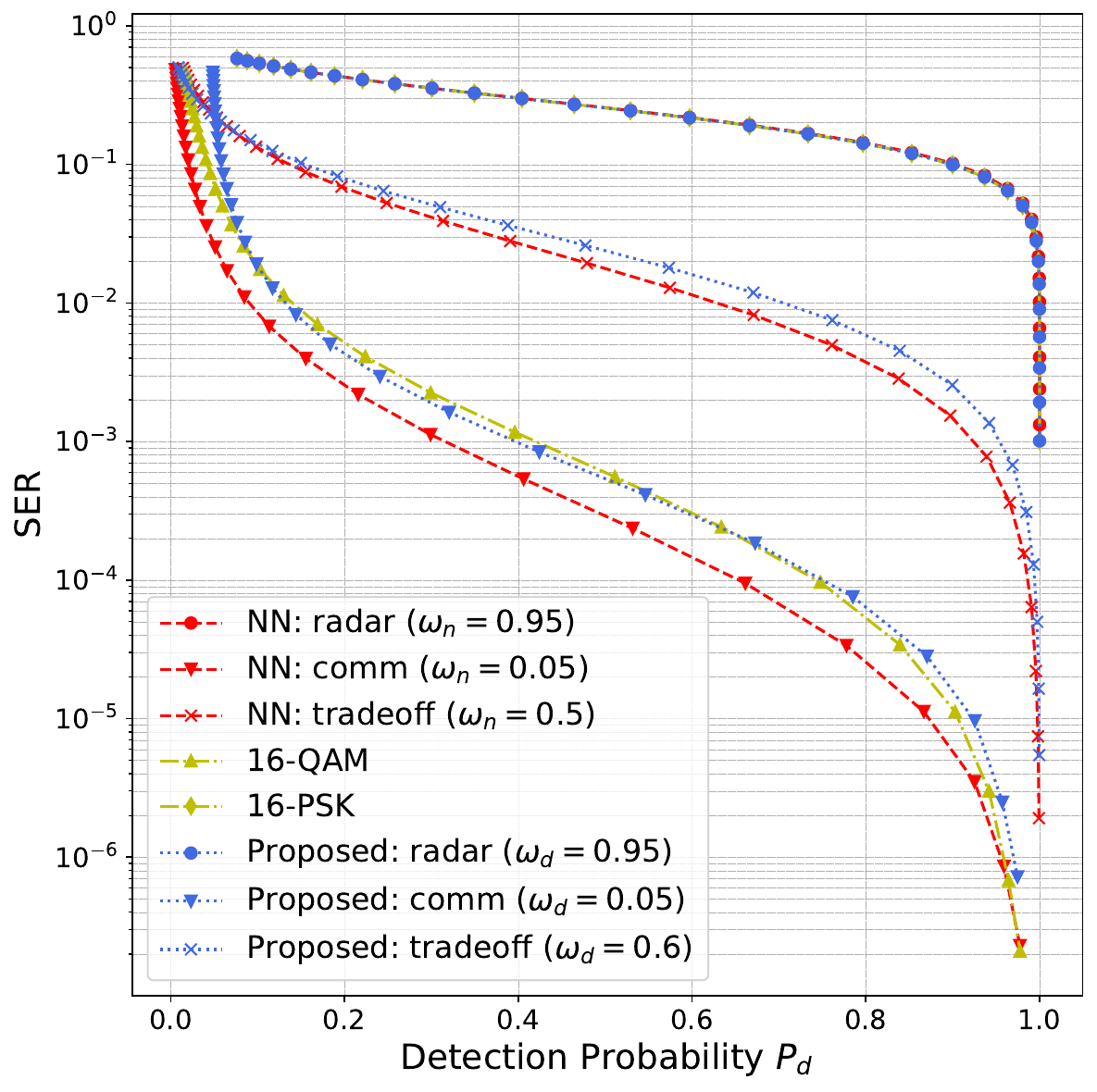}
    \caption{Trade-off between communication SER and sensing detection probability for different \(E_b/N_0\).}
    \label{fig:servspd}
\end{figure}

\begin{figure}[t]
    \centering
    \includegraphics[width=0.8\linewidth]{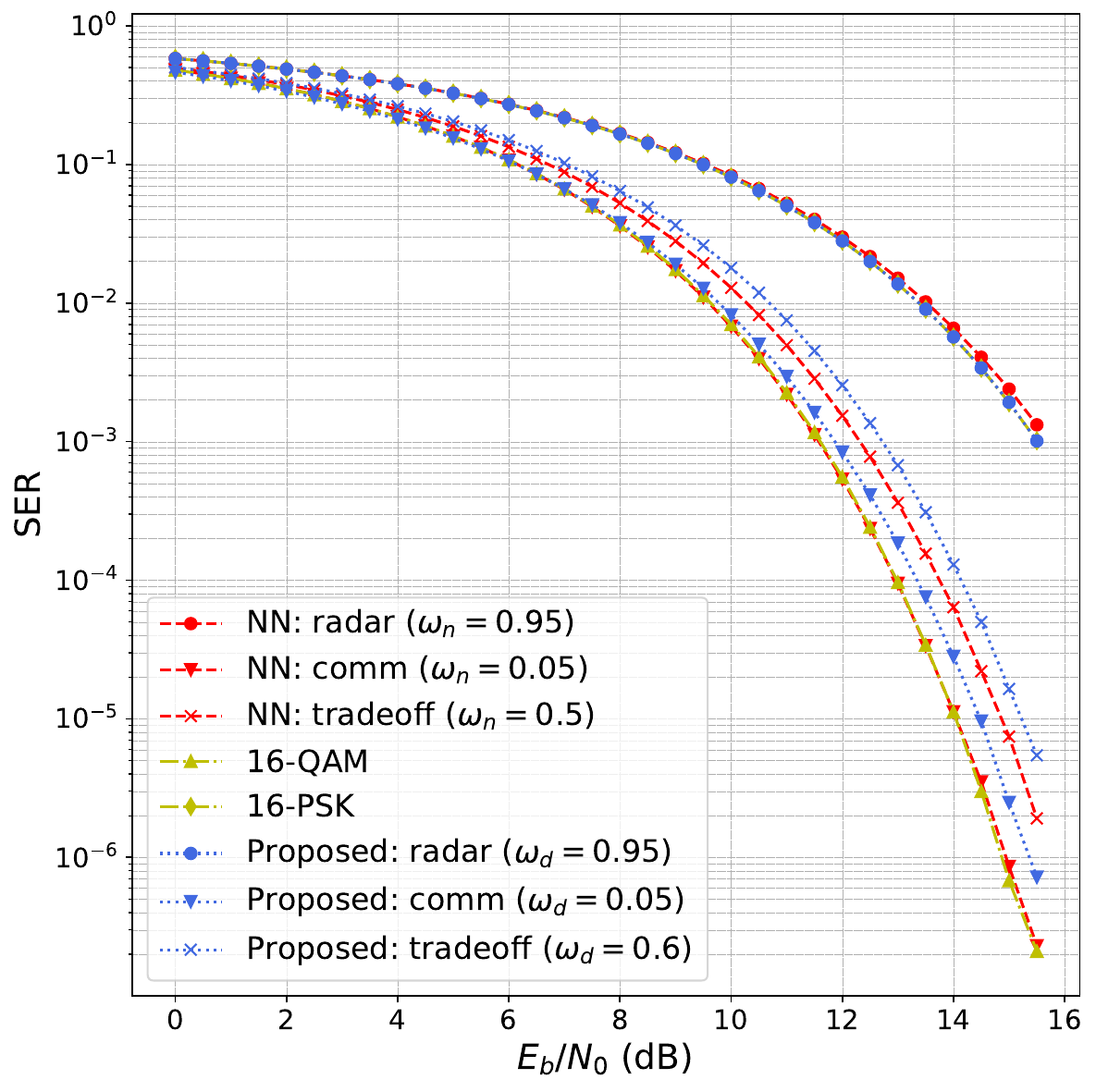}
    \caption{The symbol error rate versus  \(E_b/N_0\).}
    \label{fig:watterfall}
\end{figure}

\subsection{Comparison with NN-Based Constellations}
We compare the performance of the proposed constellation design with NN-designed constellations. 
Table~\ref{tab:scenario_comparison} presents the performance of the constellations at \( E_b/N_0 = 10~\mathrm{dB} \). The trade-off weights \( \omega_d \) and \( \omega_n \) are chosen such that \( P_d \) remains approximately the same across all scenarios, with the designed constellations illustrated in Fig.~\ref{fig:constellations_all}. As seen in each case, the proposed method achieves performance comparable to that of the NN. In the communication-centric scenario, the proposed method yields slightly worse SER, as it relies on a predefined constellation with less flexibility but implementation advantages due to its symmetric structure.

In Fig.~\ref{fig:servspd}, we plot the SER versus \( P_d \) for \( E_b/N_0 \) ranging from 0 to 15~dB with 0.5~dB steps. In this figure, each point represents a given SNR. Performance improves toward the bottom-right region, where $P_d$ is higher and SER is lower. As seen, both our method and the NN approach are capable of synthesizing PSK-like constellations, achieving relatively high \( P_d \) with low SER. In the other extreme case, our method is also able to generate constellations that maximize the minimum distance—similar to NN-based designs—achieving performance that is comparable to the QAM constellation, which offers low SER at the expense of lower \( P_d \). The trade-off scenario represents the typical operating region for ISAC systems, where both sensing and communication metrics need to be within acceptable ranges. Our method is capable of generating multi-ring constellations that strike a balance between \( P_d \) and SER, offering competitive performance to the NN-designed constellations—without requiring a labeled dataset or high computational cost, and with an analytically driven design. Figure~\ref{fig:watterfall} compares the SER curves for the different constellations.


\begin{table}[!t]
\centering
\caption{Comparison of SER and \( P_d \) in different scenarios}
\label{tab:scenario_comparison}
\renewcommand{\arraystretch}{1.2}
\begin{tabular}{llcccc}
\toprule
\textbf{Scenario} & & \textbf{Proposed} & \textbf{NN-Based} & \textbf{16-PSK} & \textbf{16-QAM} \\
\midrule
\multirow{2}{*}{Radar-centric} 
  & \textbf{SER}     & $10^{-1.09}$ & $10^{-1.07}$ & $10^{-1.09}$ & \cellcolor{gray!15} -\\
  & \(P_d\) & $0.93$       & $0.93$       & $0.93$       & \cellcolor{gray!15} -\\
\midrule
\multirow{2}{*}{Trade-off} 
  & \textbf{SER}     & $10^{-1.74}$ & $10^{-1.88}$ & \cellcolor{gray!15} - & \cellcolor{gray!15} -\\
  & \(P_d\) & $0.57$       & $0.57$       & \cellcolor{gray!15} -& \cellcolor{gray!15} -\\
\midrule
\multirow{2}{*}{Comm.-centric} 
  & \textbf{SER}     & $10^{-2.08}$ & $10^{-2.16}$ & \cellcolor{gray!15} -  & $10^{-2.15}$ \\
  & \(P_d\) & $0.14$       & $0.11$       & \cellcolor{gray!15} - & $0.16$ \\
\bottomrule
\end{tabular}

\end{table}


\begin{figure*}[t]
    \centering
    \includegraphics[width= 1.1\linewidth]{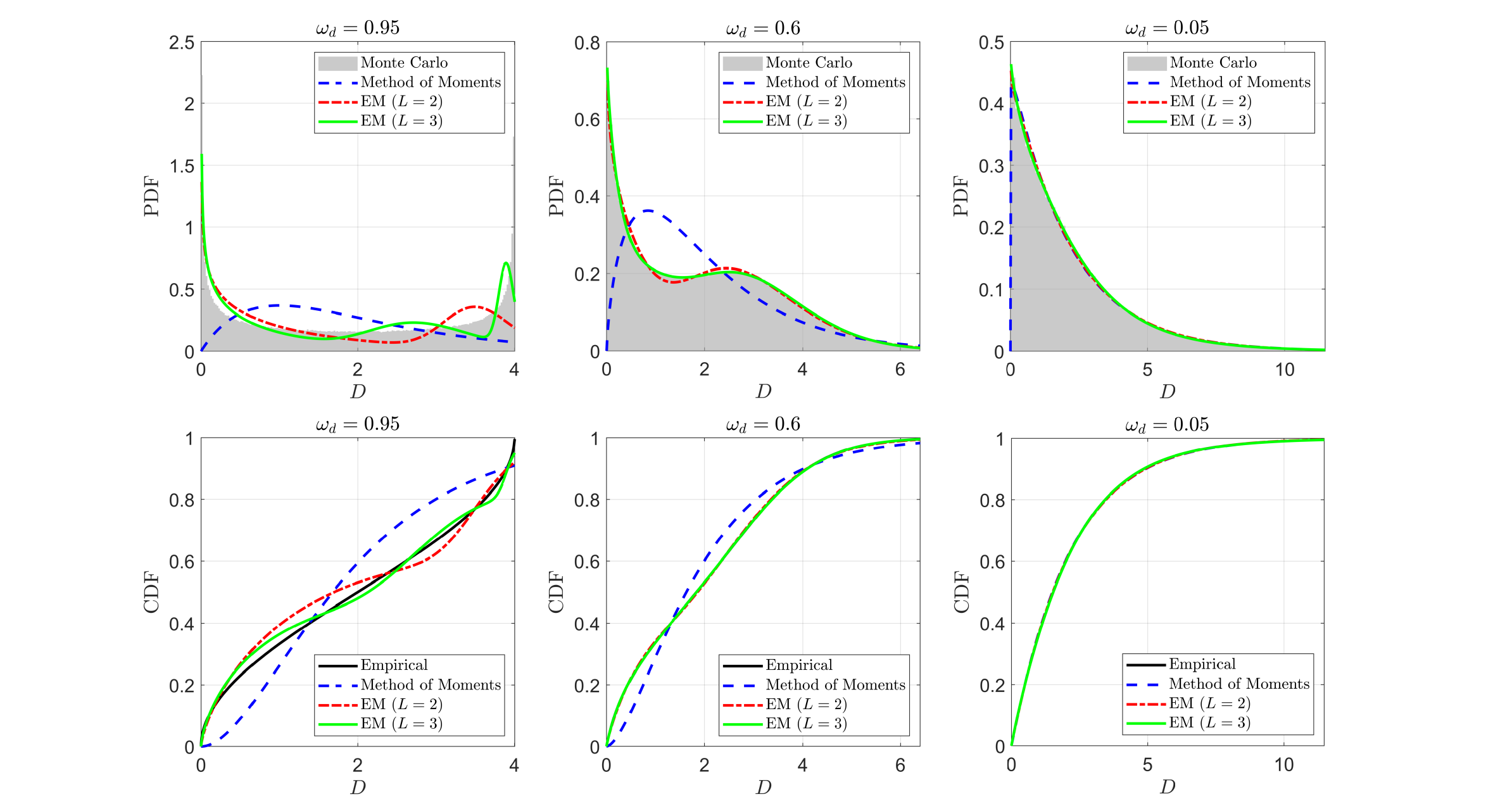}
    \caption{Verification of fitting the distribution of $D$ via a mixture of Gamma method for $\omega = \{0.05, 0.6, 0.95\}$.}
    \label{fig:pdf_cdf}
\end{figure*}

\subsection{Performance Limits of the Proposed Method}

The accuracy of the mixture-of-Gamma approximation for modeling the distribution of $D = d_{ij}^2$ (the squared distance between constellation symbols), proposed in section~\ref{sec:UB}, is validated in Fig.~\ref{fig:pdf_cdf}, where the PDF and  cumulative density function (CDF) are compared with those obtained from MC simulations. Three ISAC designs, corresponding to the proposed constellations with design weights $\omega_d = \{0.05, 0.6, 0.95\}$, are considered. In each case, the method of moments is also plotted.

For the communication-centric and trade-off designs, it is observed that both the $L = 2$ and $L = 3$ component Gamma mixtures provide an excellent fit to the empirical PDF and CDF, closely tracking the histogram across the entire support including the tail region. However, the widely-used method-of-moments clearly depicts deviations in modeling the PDF and CDF. This further emphasizes the importance of accurately fitting the distribution of $D$ using the proposed EM-based approach, as detailed in Appendix~\ref{app:AppendixB}.

For the radar-centric design ($\omega_d = 0.95$), a mismatch at the tail is observed between the fitted Gamma mixture and the empirical distribution. This degradation arises because the shape parameter is large causing the constellation amplitudes to concentrate near a constant value ($\rho_i \approx 1$, PSK-like). When the amplitude variance vanishes, the squared distance $D \approx 2(1 - \cos(\Delta\phi))$ is governed entirely by the phase difference, and its distribution converges to the Arcsine density, which is $U$-shaped. Hence, a lower number of $L$ components in the mixture fails to capture the exact tail shape. However, the accuracy of the approximation improves systematically with increasing $L$. Further increasing $L$ continues to reduce the residual mismatch by allocating additional components to approximate the boundary behavior.



\begin{figure}
    \centering
    \includegraphics[width=0.8\linewidth]{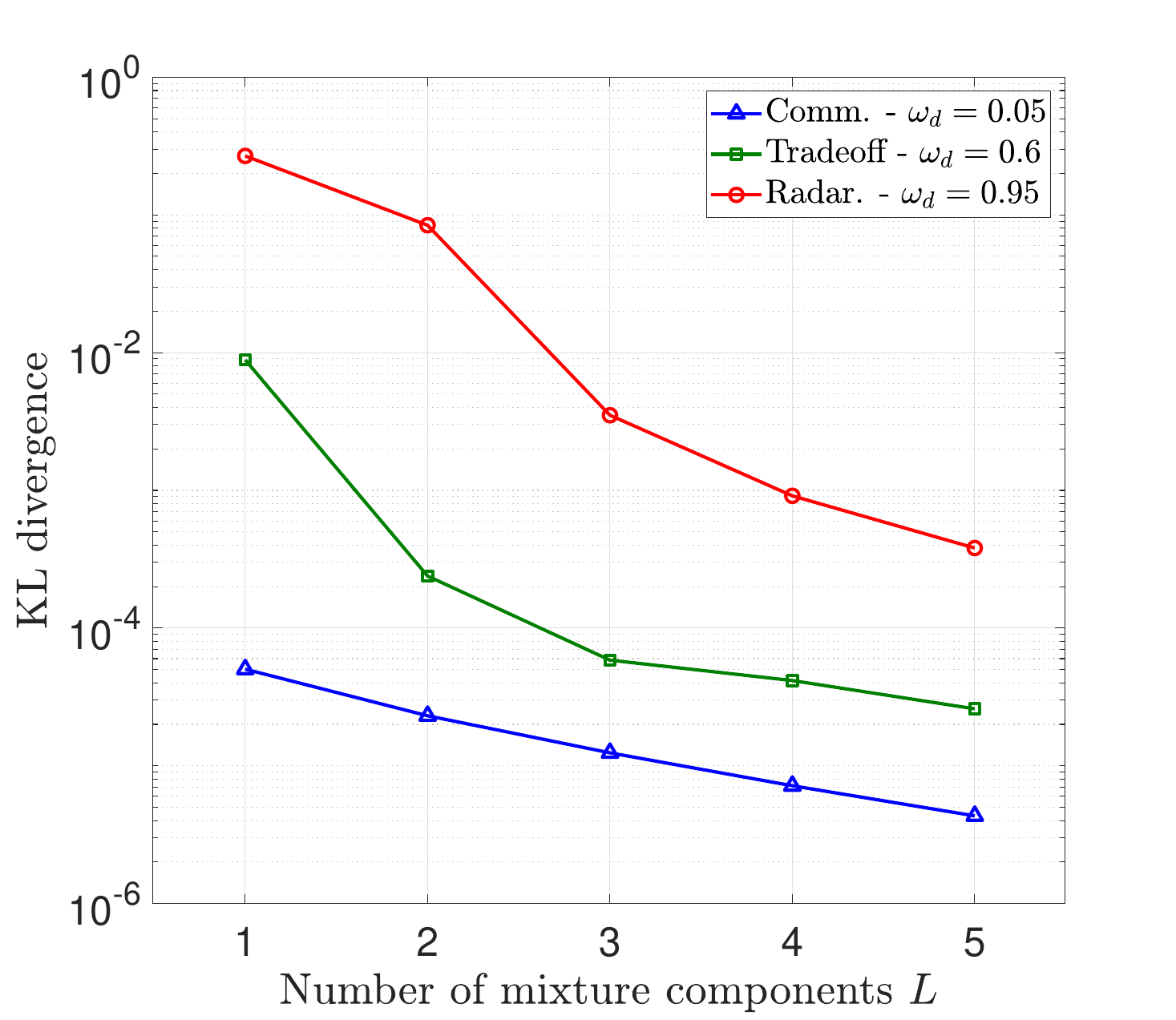}
    \caption{KL divergence for different mixture components}
    \label{fig:KL_divergence}
\end{figure}

Figure~\ref{fig:KL_divergence} evaluates the accuracy of the proposed Gamma mixture-based PDF for $D$ using the KL divergence, which measures the difference from a reference distribution. It quantifies the  amount of information lost when the PDF $\hat{f}_D$ is used to approximate
 the true PDF.
 Here, the KL divergence   between the empirical distribution of $D$ and the fitted PDF (\ref{eqn:mixture_pdf0}) is plotted as a function of the number of mixture components $L$ for the three proposed constellation designs. For all three designs, the KL divergence decreases monotonically with increasing $L$. The communication-centric and the trade-off designs achieve KL divergence below $10^{-3}$ for $L \geq 2$, indicating that the proposed mixture approximation is accurate for these cases. The radar-centric design ($\omega_d = 0.05$) exhibits a higher KL divergence ($> 10^{-3}$) for $L > 3$ reflecting the shape mismatch between the Arcsine-like true distribution and the Gamma mixture approximation, as also  discussed in Fig.~\ref{fig:pdf_cdf}. However, the KL divergence decreases below $10^{-3}$ for higher $L$ components, which render it also applicable for the radar-centric design.  For the purpose of the SER union bound presented in this work, the KL divergence analysis in Fig.~\ref{fig:KL_divergence} validates that  the proposed Gamma mixture approximation with $L = \{3,4\}$ provides accurate results.

\begin{figure}
    \centering
    \includegraphics[width=0.8\linewidth]{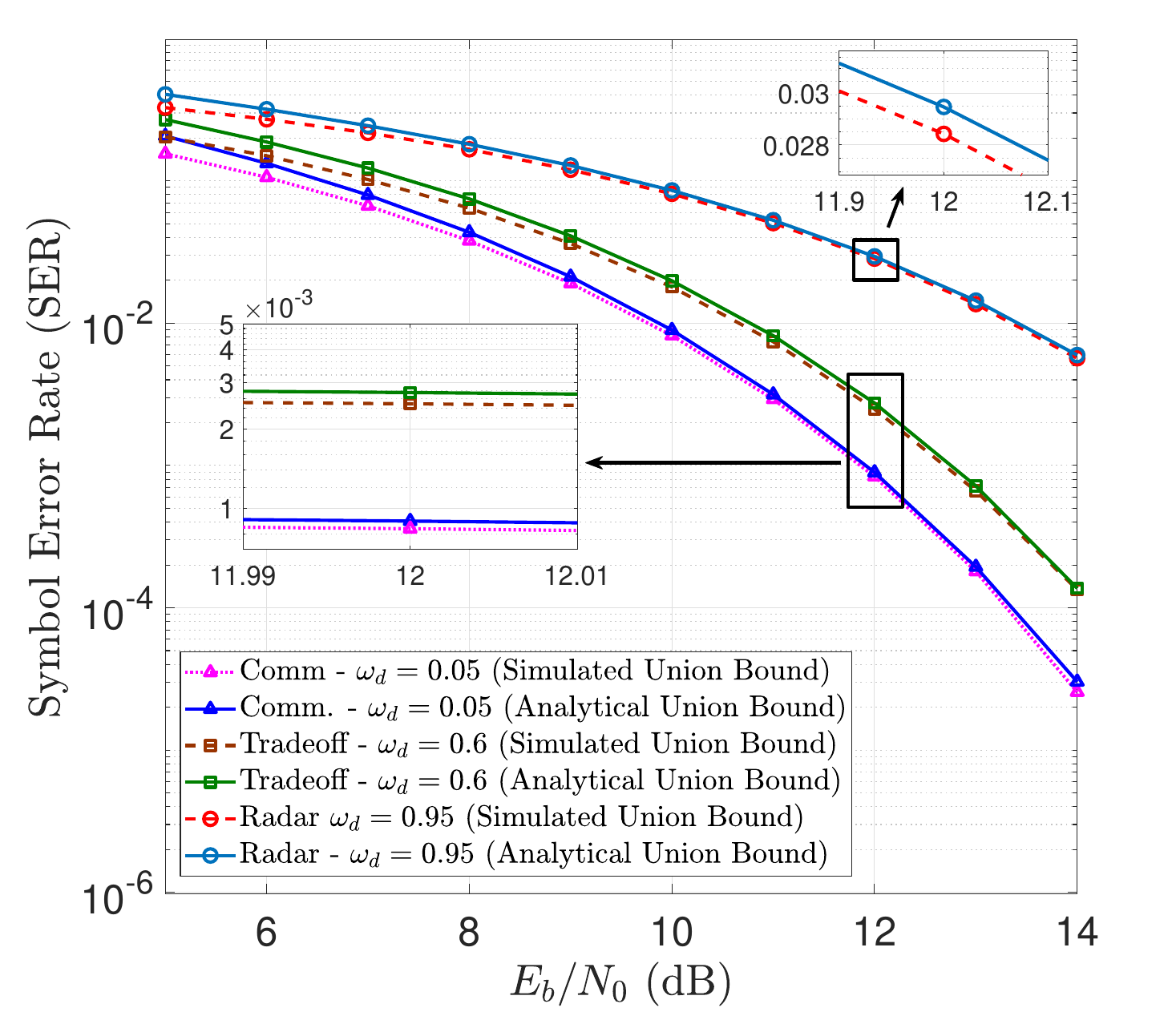}
    \caption{The average  SER   for the proposed ISAC designs.}
    \label{fig:watterfall_ub}
\end{figure}

Figure~\ref{fig:watterfall_ub} presents the average SER performance of the three designed constellations ($\omega_d = \{0.05, 0.6, 0.95\}$) as a function of $E_b/N_0$. The simulated SER curves are plotted by first averaging the  conditional PEP in (\ref{eq:pep})  over the exact distribution of the proposed constellation points generated through the MC method, and then using the   SER union bound in (\ref{eq:ser_ub_fixed}).
The simulated SER curves are compared against our closed form analytical SER union bound obtained by substituting (\ref{eq:ser_gamma}) into (\ref{eq:ser_ub_fixed}).  
For all three designs, the proposed analytical SER union bound is observed to closely track the simulated SER across the entire $E_b/N_0$ range, with the gap narrowing progressively at the higher SNR regime. 
The communication-centric design ($\omega_d = 0.05$) achieves the lowest SER owing to its larger pairwise distances, while the radar-centric design ($\omega_d = 0.95$) exhibits the highest SER due to its PSK-like geometry with smaller inter-point separation. The trade-off design ($\omega_d = 0.6$) falls between these two extremes.


In Fig.~\ref{fig:CRB}, the sensing performance of the proposed constellations is evaluated through the CRB of channel parameter estimation. The  $\overline{\mathrm{CRB}}(\zeta_r)$ is plotted against $E_b/N_0$ for the three design weights $\omega_d = \{0.95, 0.6, 0.05\}$. For each design, the simulated CRB curves are plotted by evaluating the statistical expectation operation in \eqref{eq:avg_crb_hs} over the exact distribution of the proposed constellation points through the MC method. The analytical curves are plotted by evaluating the proposed closed-form CRB expression \eqref{eq:avg_crb}. 
The variance of the MLE for $\zeta_r$ in \eqref{eq:mle} is also plotted via (\ref{eq:mse}). The simulated CRB and the variance of this MLE overlap exactly for all three  constellation designs across the entire $E_b/N_0$ regime, confirming that the MLE in (\ref{eq:mle}) is an efficient MVUE. Moreover, as a reference, the CRB of the 16-PSK constellation is also shown, which is $\overline{\mathrm{CRB}} = \sigma_r^2$ since all PSK points have identical unit amplitude, representing the best achievable CRB for any unit-power constellation. It is observed that the radar-centric design ($\omega_d = 0.95$) achieves the lowest CRB, approaching the PSK reference, as its constellation points are concentrated near the unit circle with minimal amplitude dispersion. The trade-off design ($\omega_d = 0.6$) exhibits a moderately higher CRB, while the communication-centric design ($\omega_d = 0.05$) incurs the highest CRB due to the presence of constellation points with small amplitudes that contribute large values of $\rho_i^{-2}$ to the mean.

\begin{figure}[t]
    \centering
    \includegraphics[width=1\linewidth]{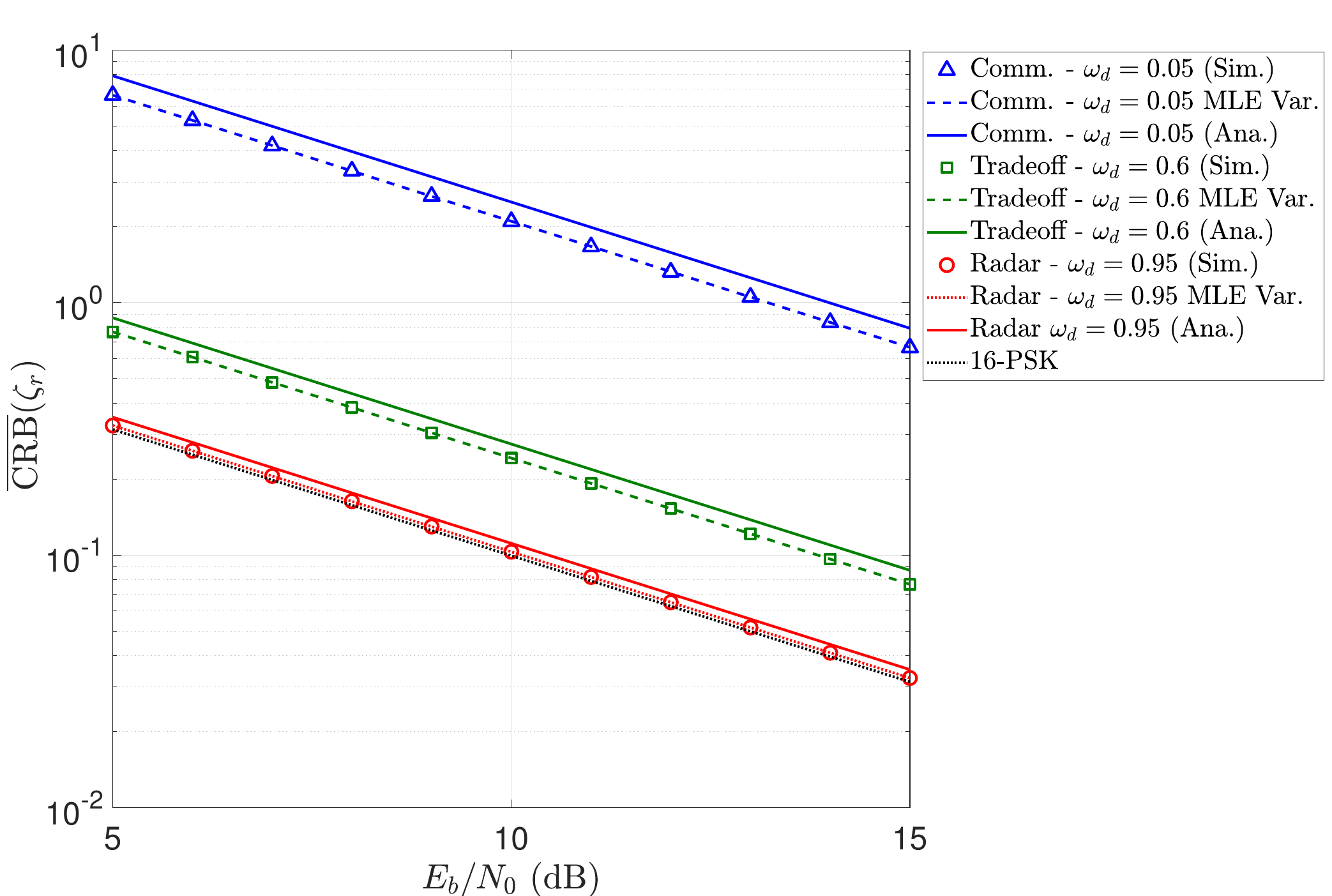}
    \caption{The CRB for estimating the channel parameter $\zeta_r$ versus $E_b/N_0$ for the proposed ISAC designs with corresponding $\alpha $ and $\beta$ listed in Table~\ref{tab:gamma_params}.}
    \label{fig:CRB}
\end{figure}

\section{Conclusion} \label{sec:conclusion}

We have proposed a novel constellation design framework in which constellation points are modeled as samples drawn from a two-dimensional PDF--Gamma-distributed in amplitude and uniformly distributed in phase. The approach formulates an optimization problem based on end-task performance metrics for both communication and sensing. The proposed method matches the performance of NN–based designs yet uses far fewer parameters and does not rely on labeled data.  By directly optimizing the 2D distribution of constellation points, our approach offers a more efficient, interpretable, and principled alternative to data-driven designs.
We derive analytical expressions for the SER upper bound in communication and the CRB for radar channel parameter estimation, and show that these bounds are reasonably tight. 
These results characterize the fundamental limits of the system and provide design insights for the proposed ISAC framework.

\appendices
\section{Derivation of SER Union Bound} \label{app:AppendixA}

We derive a closed form  expression for the upper bound of the SER when the constellation points follow the probability distributions in \eqref{eqn:Gamma_pdf}.
First, the conditional PEP is derived for a fixed constellation. Starting from \eqref{eq:communication channel},  the ML detection rule \cite{proakis2008digital} is invoked. The conditional PEP can be derived as~\cite{proakis2008digital} 
\begin{equation}\label{eqn:pep_qfunc}
P(x_i \to x_j) = Q\left(\sqrt{\frac{|d_{ij}|^2}{2\sigma_c^2}}\right),
\end{equation}
where $Q(x) =  \frac{1}{\sqrt{2\pi}} \int_x^\infty e^{-t^2/2}\,dt$ is the  Gaussian-$Q$ function. In order to derive  the average PEP,   the expected value can be taken  over the random constellation points as 
\begin{equation}
\bar{P}_{ij} = \E_{x_i, x_j} \left[Q \left(\sqrt{\frac{d_{ij}^2}{2\sigma_c^2}}\right)\right].
\end{equation}
Since the amplitudes and phases are independently distributed as  $\{\rho_i, \rho_j\} \sim \mathrm{Gamma}(\alpha,\beta)$   and $\{\phi_i, \phi_j\}\sim \mathrm{Uniform}(0, 2\pi)$, respectively,  we have
\begin{eqnarray}\label{eqn:4d_integral}
\bar{P}_{ij} = \int_0^\infty\!\!\int_0^\infty\!\!\int_0^{2\pi}\!\!\int_0^{2\pi} Q\!\left(\sqrt{\frac{d_{ij}^2(\rho_i,\rho_j,\phi_i,\phi_j)}{2\sigma_c^2}}\right) \nonumber \\ 
\times f(\rho_i)f(\rho_j)f(\phi_i) f(\phi_j)\,d\phi_i\,d\phi_j\,d\rho_i\,d\rho_j.
\end{eqnarray}
Exact evaluation of   \eqref{eqn:4d_integral} in closed form seems to be mathematically intractable.    Hence, we define a new random variable $D = d_{ij}^2$. By using the  distribution of $D$, the average PEP can be reduced to single integral as 
\begin{equation}\label{eqn:PEP_1D}
\bar{P}_{ij} = \int_0^\infty Q\!\left(\sqrt{\frac{x}{2\sigma_c^2}}\right) f_D(x)\,dx.
\end{equation}

\noindent If $f_D(d)$ can be  approximated by a probability distribution whose moment generating function  (MGF) is known in closed form, then via Craig's alternative representation of the Q-function~\cite{Craig1991}
\begin{equation}
 Q(x) = \frac{1}{\pi}\int_0^{\pi/2}\exp\!\left(-\frac{x^2}{2\sin^2\psi}\right)d\psi,
\end{equation} 
the integral in \eqref{eqn:PEP_1D} can be evaluated analytically. By first substituting the above into \eqref{eqn:PEP_1D} and then interchanging the order of integration,  $\bar{P}_{ij}$ can be rewritten as 
\begin{equation}\label{eqn:pep_mgf}
\bar{P}_{ij} = \frac{1}{\pi}\int_0^{\pi/2} M_D\!\left(-\frac{1}{4\sigma_c^2\sin^2\psi}\right)d\psi,
\end{equation}
where $M_D(s) = \E[e^{sD}]$ is the MGF of $D$.
To approximate the probability distribution of $D$, a \textit{mixture of Gamma distributions} method \cite{Atapattu2011} can be invoked. Thus,  the  probability distribution of $D$ can be approximated as an $L$-component mixture of Gamma distributions as  
\begin{eqnarray}\label{eqn:mixture_pdf}
f_D(d) &\approx& \sum_{\ell=1}^{L} \omega_\ell\, f_{\mathrm{Gamma}}(d;\, \alpha_\ell, \beta_\ell) \nonumber \\ 
&=& \sum_{\ell=1}^{L} \omega_\ell \frac{d^{\alpha_\ell - 1}\,e^{-d/\beta_\ell}}{\Gamma(\alpha_\ell)\,\beta_\ell^{\alpha_\ell}}, \quad d > 0,
\end{eqnarray}
 where $\omega_\ell \geq 0$ with $\sum_{\ell=1}^{L} \omega_\ell = 1$.  Each component has a shape parameter $\alpha_\ell > 0$ and a scale parameter $\beta_\ell > 0$. The MGF of the mixture of Gamma distributions can be defined as \cite{Atapattu2011}
\begin{equation}\label{eqn:mgf_mixture}
M_D(s) = \sum_{\ell=1}^{L} \omega_\ell\,(1 - \beta_\ell s)^{-\alpha_\ell}.
\end{equation}
By substituting \eqref{eqn:mgf_mixture} into \eqref{eqn:pep_mgf}, the average PEP can be decomposed into a weighted sum of single-component integrals as follows:
\begin{equation}\label{eq:pep_mixture}
\bar{P}_{ij} = \sum_{\ell=1}^{L} \omega_\ell\,\underbrace{\frac{1}{\pi}\int_0^{\pi/2}\!\left(\frac{\sin^2\psi}{\sin^2\psi + \gamma_\ell}\right)^{\alpha_\ell}d\psi}_{\triangleq\;\mathcal{I}(\alpha_\ell,\,\gamma_\ell)},
\end{equation}

\noindent where $\gamma_\ell \triangleq \frac{\beta_\ell}{4\sigma_c^2}$, and $\mathcal{I}(\alpha_\ell, \gamma_\ell)$ can be derived in closed form  by using  \cite[Eqs. (5.72)]{SimonDigital2004} as 
\begin{equation}\label{eq:I_beta}
\mathcal{I}(\alpha_\ell, \gamma_\ell) = \frac{1}{2}\,I_{\frac{1}{1+\gamma_\ell}}\!\left(\frac{1}{2},\, \alpha_\ell\right),
\end{equation}

\noindent where $I_x(a,b) = B(x;\,a,b)/B(a,b)$ is the regularized incomplete beta function \cite[Eqs. (5.73)]{SimonDigital2004}.  The parameters $\{\omega_\ell, \alpha_\ell, \beta_\ell\}_{\ell=1}^{L}$  can be determined via an EM-based algorithm for the mixture of Gamma distribution as explained in Appendix \ref{app:AppendixB}.

\section{EM Algorithm for Gamma Mixtures \\Parameter Fitting} \label{app:AppendixB}

The EM algorithm operates on a data-set $\mathcal{D} = \{ d^{(1)}, \cdots, d^{(N)}\}$ of $N$ independent samples drawn from the true distribution of $D$, generated by sampling \{$\rho^{(n)}_i, \rho^{(n)}_j \} \sim \mathrm{Gamma}(\alpha, \beta)$ and $\Delta \phi \sim \mathrm{Uniform}[0, 2\pi]$. Then it computes   $$d^{(n)}= (\rho^{(n)}_i)^2 + (\rho^{(n)}_j)^2 - 2\rho^{(n)}_i\rho^{(n)}_j\cos(\Delta\phi^{(n)})$$ for each $n$. The observed-data log-likelihood (LL) under the mixture of Gamma model \eqref{eqn:mixture_pdf} can be written  as
\begin{equation}\label{eq:loglik}
LL(\boldsymbol{\Theta}) = \sum_{n=1}^{N}\ln\!\left[\sum_{\ell=1}^{L} \omega_\ell\,f_{\mathrm{Gamma}}(d^{(n)};\,\alpha_\ell,\beta_\ell)\right],
\end{equation}
\noindent where $\boldsymbol{\Theta} = \{\omega_\ell, \alpha_\ell, \beta_\ell\}_{\ell=1}^{L}$. The direct maximization of \eqref{eq:loglik} is intractable due to the logarithm of a sum. Hence,  the EM algorithm first introduces a latent assignment variable for each sample indicating which component it generated, and then it iteratively alternates between computing the posterior distribution of these assignments (E-step) and updating the parameters to maximize the expected complete-data log-likelihood (M-step). Each iteration is guaranteed to increase $LL(\boldsymbol{\Theta})$, ensuring monotonic convergence to a stationary point.
In E-step, for each sample $d^{(n)}$ and each component $\ell$, the responsibility $r_{n\ell}$ measures how likely it is that the component $\ell$ generated $d^{(n)}$. Next, the responsibility metrics  can be defined via Bayes' rule as follows:
\begin{equation}\label{eq:estep}
r_{n\ell} = \frac{\omega_\ell\,f_{\mathrm{Gamma}}(d^{(n)};\,\alpha_\ell,\beta_\ell)}{\sum_{\ell'} \omega_{\ell'}\,f_{\mathrm{Gamma}}(d^{(n)};\,k_{\ell'},\theta_{\ell'})},
\end{equation}
which is evaluated in the log domain via the log-sum-exp trick to avoid numerical underflow \cite{Bishop2006}. The M-step then updates the parameters using the sufficient statistics $N_\ell = \sum_n r_{n\ell}$, $\bar{d}_\ell = N_\ell^{-1}\sum_n r_{n\ell}\,d^{(n)}$, and $\overline{\ln d}_\ell = N_\ell^{-1}\sum_n r_{n\ell}\ln d^{(n)}$, where $N_\ell$ is the effective number of samples assigned to component $\ell$, $\bar{d}_\ell$ is their weighted mean, and $\overline{\ln d}_\ell$ is their weighted log-mean, respectively.
Then, the weight and scale updates can be   derived as 
\begin{equation}\label{eq:updated_w_theta}
\omega_\ell = \frac{N_\ell}{N} \quad \text{and}\quad \beta_\ell = \frac{\bar{d}_\ell}{\alpha_\ell}. 
\end{equation}
The shape parameter update leads to a transcendental equation. We define $s_\ell = \ln\bar{d}_\ell - \overline{\ln d}_\ell$ and $\alpha_\ell$ satisfies
\begin{equation}\label{eq:shape_eq}
\ln \alpha_\ell - \psi(\alpha_\ell) = s_\ell, 
\end{equation}
where $\psi(k) = \frac{d}{dk} \ln \Gamma(k)$ is the digamma function \cite[Eq. 8.36]{gradshteyn2014table}. This equation has a unique root for each $s_\ell > 0$, and it can be  solved by Newton's method as 
\begin{equation}\label{eq:newton}
\alpha_\ell \leftarrow \alpha_\ell - \frac{\ln \alpha_\ell - \psi(\alpha_\ell) - s_\ell}{1/\alpha_\ell - \psi'(\alpha_\ell)},
\end{equation}
where $\psi'(\alpha_\ell)$ is the trigamma function. A good initialization is with the Minka approximation \cite{Minka2002} $$\alpha_\ell^{(0)} = (3 - s_\ell + \sqrt{(s_\ell-3)^2 + 24s_\ell}\,)/12s_\ell$$  and Newton's method converges with fewer iterations. 

\begin{rem}
    The scale and shape parameters are coupled as such $\beta_\ell$ depends on $\alpha_\ell$ through $\beta_\ell = \bar{d}_\ell/\alpha_\ell$. In practice, we first solve for $\alpha_\ell$ via \eqref{eq:shape_eq}-\eqref{eq:newton}, then compute $\beta_\ell$ via \eqref{eq:updated_w_theta}.
\end{rem}

The EM algorithm converges to a local maximum, and hence, precise  initialization matters. A simple and effective strategy is to sort the samples, split them into $\ell$ equal-sized groups by quantiles, and moment-match a Gamma to each group as 
\begin{equation}\label{eq:init}
\alpha_\ell^{(0)} = \frac{\hat{\mu}_\ell^2}{\hat{\sigma}^2_\ell}, \quad \beta_\ell^{(0)} = \frac{\hat{\mu}_\ell}{\hat{\sigma}^2_\ell}, \quad \text{and}\quad \omega_\ell^{(0)} = \frac{1}{L},
\end{equation}
where $\hat{\mu}_\ell$ and $\hat{\sigma}^2_\ell$ are the sample mean and variance within each group. To mitigate sensitivity to initialization, the entire EM procedure is run $R$ times with perturbed initializations, and the run that achieves the highest final log-likelihood is selected.

A fundamental property of the EM algorithm is that the observed-data LL \eqref{eq:loglik} is non-decreasing at each iteration as 
\begin{equation}
LL(\boldsymbol{\Theta}^{(\mathrm{new})}) \geq LL(\boldsymbol{\Theta}^{(\mathrm{old})}).
\end{equation}
The algorithm is terminated when the relative improvement in log-likelihood falls below a tolerance $\epsilon$ as
\begin{equation}\label{eq:stopping}
\frac{LL(\boldsymbol{\Theta}^{(\mathrm{new})}) - LL(\boldsymbol{\Theta}^{(\mathrm{old})})}{|LL(\boldsymbol{\Theta}^{(\mathrm{old})})| + \epsilon_0} < \epsilon,
\end{equation}
where $\epsilon = 10^{-8}$ and $\epsilon_0 = 10^{-16}$ are typical choices. A maximum iteration count $T_{\max}$ is also imposed as a safeguard. 

\section{Derivation of MLE and CRB for Sensing \\Channel Parameter} \label{app:AppendixC} 


In monostatic radar, the sensing receiver typically knows the transmitted symbol $x_i$, and hence, the underlying  estimation problem reduces to  estimating $\zeta_r$ given $y_r$  conditioned on $x_i$. Then, $y_r$ can be modeled as $y_r|x_i \sim \mathcal{CN}(\zeta_r x_i, \sigma_r^2)$. The MLE of $\zeta_r$ can be derived by maximizing the log-likelihood $\mathrm{ln} f (y_r|x_i)$ as 
\begin{eqnarray}\label{eq:mle_deriv}
\hat{\zeta}_r &=& \underset{\zeta_r}{\mathrm{arg\;max}} \;\; \mathrm{ln}(\pi \sigma_r^2) - \frac{|y_r - \zeta_r x_i|^2}{\sigma_r^2} \nonumber \\
&=&  \underset{\zeta_r}{\mathrm{arg\;min}} \;\; |y_r - \zeta_r x_i|^2.\label{eqn:MLE1}
\end{eqnarray}
Then, the unconstrained minimization in  (\ref{eqn:MLE1}) can be handled by taking taking the first-order derivative with respect to $\zeta_r$, and thereby, 
the MLE of $\zeta_r$ can be derived as  
\begin{equation}\label{eq:mle}
\hat{\zeta}_r = \frac{y_r\,x_i^*}{\rho_i^2}.
\end{equation}

By substituting observation \eqref{eq:radar channel} into \eqref{eq:mle} and taking expectation, we can show that this MLE  is unbiased since $\mathbb{E}[\hat{\zeta_r}] = \zeta_r$. Then, the variance of the MLE in \eqref{eq:mle}  can be derived  as
\begin{align}
\Var({\hat{\zeta}_r}) = \mathbb{E}\!\left[ \left| \hat{\zeta}_r - \zeta_r \right|^2 \right] = \E\left[\left|\frac{n_r x_i^*}{\rho_i^2}\right|^2\right] = \frac{\sigma_r^2}{\rho_i^2}. \label{eq:mse}
\end{align}
Next, we derive the CRB for estimating the sensing channel parameters. By letting $\mu \triangleq \zeta_r x_i$ and $\zeta_r = \zeta_I + j\zeta_Q$,  for a parameter vector $\boldsymbol{\eta} = [\zeta_I, \zeta_Q]^T$, the $(m,n)$th element of the Fisher information matrix  is given by  \cite{Kay1993}
\begin{align}\label{eq:fim_def}
[\mathbf{J}(\boldsymbol{\eta})]_{mn} = 2\,\mathfrak{Re} \left\{\frac{\partial\mu^*}{\partial\eta_m}\,\frac{1}{\sigma_r^2}\,\frac{\partial\mu}{\partial\eta_n}\right\}.
\end{align}
In our case, $\mu = \zeta_r x_i$ resulting in \begin{align}
\frac{\partial\mu}{\partial\zeta_I} = x_i   \quad\text{and}\quad 
\frac{\partial\mu}{\partial\zeta_Q} = jx_i. \label{eq:dmu_dtheta}
\end{align}
Next, the conditional Fisher information matrix can be derived via  \eqref{eq:fim_def}-\eqref{eq:dmu_dtheta} in closed form as 
\begin{eqnarray} 
J_{\zeta_I\zeta_I} &=& \frac{2}{\sigma_r^2}\mathfrak{Re} \left\{(x_i)^*\cdot x_i\right\} \nonumber \\ 
&=& \frac{2}{\sigma_r^2}\mathfrak{Re}\{|x_i|^2\} = \frac{2\rho_i^2}{\sigma_r^2}, \label{eq:J_xixi}\\[6pt]
J_{\zeta_Q\zeta_Q} &=& \frac{2}{\sigma_r^2}\mathfrak{Re}\!\left\{(j x_i)^*(j x_i)\right\} \nonumber \\ 
&=& \frac{2}{\sigma_r^2}\mathfrak{Re}\{|x_i|^2\} = \frac{2 \rho_i^2}{\sigma_r^2}, \label{eq:J_thth}\\[6pt]
J_{\zeta_I\zeta_Q} &=& \frac{2}{\sigma_r^2}\mathfrak{Re}\!\left\{(x_i)^*\cdot j x_i\right\} \nonumber \\ 
&=& \frac{2}{\sigma_r^2}\mathfrak{Re}\{j|x_i|^2\} = 0, \label{eq:J_xith}  \\
\mathbf{J}(\boldsymbol{\eta}| x_i) &=& \frac{2\rho_i^2}{\sigma_r^2}\begin{bmatrix} 1 & 0 \\ 0 & 1\end{bmatrix}. \label{eq:final_FIM}
\end{eqnarray}
Then, the conditional CRB  given  a fixed constellation design can be written  as 
\begin{equation}\label{eq:crb_hs}
\mathrm{CRB}(\zeta_r | x_i) = \mathrm{Tr}\left([\mathbf{J}(\boldsymbol{\eta}| x_i)]^{-1} \right) = \frac{\sigma_r^2}{\rho_i^2}= \frac{\sigma_r^2}{|x_i|^2}.
\end{equation}

By comparing (\ref{eq:mse}) and (\ref{eq:crb_hs}), we conclude that the MLE   in \eqref{eq:mle} is an efficient MVUE.

Next, we derive an average CRB over the distribution of the proposed constellation points. 
Since the transmitted symbol is random, as its amplitude and phase are drawn from the Gamma and uniform distributions, respectively,  an average  CRB can be derived by averaging it over all possible constellation realizations. The phase $\phi_i$ does not appear in $\rho_i^2 = |x_i|^2$. Thus,  this averaging is taken over the amplitude distribution only. Then, the average CRB can be derived as 
\begin{equation}\label{eq:avg_crb_hs}
\overline{\mathrm{CRB}}(\zeta_r) = \sigma_r^2\,\E[\rho_i^{-2}].
\end{equation}
\noindent The expectation term in \eqref{eq:avg_crb_hs} can be derived by using the PDF in (\ref{eqn:Gamma_pdf}) as
\begin{subequations}
\begin{eqnarray}
\E[\rho_i^{-2}] &=& \int_0^\infty \rho_i^{-2} \frac{\rho_i^{\alpha-1}e^{-\rho_i / \beta}}{\Gamma(\alpha)\beta^\alpha}\,d\rho_i  \nonumber \\
&=& \frac{1}{\Gamma(\alpha)\beta^\alpha}\int_0^\infty \rho_i^{\alpha-3}e^{-\rho_i/\beta}\,d\rho_i \label{eq:inte} \\ 
&=& \frac{\Gamma(\alpha-2)\beta^{\alpha-2} }{\Gamma(\alpha)\beta^\alpha} \label{eq:exp_rho_2}  \label{eq:inte_gamma} \\
 &=& \frac{1}{\beta^2(\alpha-1)(\alpha-2)}, \quad \alpha > 2. \label{eq:exp_rho_2}
\end{eqnarray}
\end{subequations}
where  (\ref{eq:inte_gamma}) is written by using  \cite[Eq. (8.310)]{gradshteyn2014table}  and (\ref{eq:exp_rho_2}) is obtained via \cite[Eq. (8.331)]{gradshteyn2014table}. Then, the average CRB can be   derived in closed form by substituting  \eqref{eq:exp_rho_2} into (\ref{eq:avg_crb_hs})  as shown in (\ref{eq:avg_crb}).  

\begin{rem}
   The integral in \eqref{eq:inte}, 
converges  if and only if $\alpha > 2$ \cite[Eq. (8.310)]{gradshteyn2014table}. For $0< \alpha \leq 2$, 
it diverges, making the average  CRB  approaches infinity. This establishes $\alpha > 2$   feasibility boundary of the proposed analytical solution for the CRB in (\ref{eq:avg_crb}).  
\end{rem}

\bibliographystyle{IEEEtran}

\begin{thebibliography}{10}
	\providecommand{\url}[1]{#1}
	\csname url@samestyle\endcsname
	\providecommand{\newblock}{\relax}
	\providecommand{\bibinfo}[2]{#2}
	\providecommand{\BIBentrySTDinterwordspacing}{\spaceskip=0pt\relax}
	\providecommand{\BIBentryALTinterwordstretchfactor}{4}
	\providecommand{\BIBentryALTinterwordspacing}{\spaceskip=\fontdimen2\font plus
		\BIBentryALTinterwordstretchfactor\fontdimen3\font minus
		\fontdimen4\font\relax}
	\providecommand{\BIBforeignlanguage}[2]{{%
			\expandafter\ifx\csname l@#1\endcsname\relax
			\typeout{** WARNING: IEEEtran.bst: No hyphenation pattern has been}%
			\typeout{** loaded for the language `#1'. Using the pattern for}%
			\typeout{** the default language instead.}%
			\else
			\language=\csname l@#1\endcsname
			\fi
			#2}}
	\providecommand{\BIBdecl}{\relax}
	\BIBdecl
	
	\bibitem{conference_version}
	A.~Keshavarzchafjiri and M.~Vaezi, ``Geometric constellation design for {ISAC}
	using the {Gamma} distribution,'' in \emph{Proc. IEEE Global Communications
		Conference (GLOBECOM)}, 2025, pp. 6322--6327.
	
	\bibitem{liu2022isacwaveform}
	F.~Liu, Y.~Cui, Y.~Liu, X.~Huang, C.~Yuen, Z.~Zhang, and H.~V. Poor,
	``Integrated sensing and communication waveform design: A survey,''
	\emph{IEEE Wireless Communications}, vol.~29, no.~4, pp. 144--151, 2022.
	
	\bibitem{MSE_MIMO}
	Z.~He, H.~Shen, W.~Xu, Y.~C. Eldar, and X.~You, ``{MSE-based training and
		transmission optimization for MIMO ISAC Systems},'' \emph{IEEE Transactions
		on Signal Processing}, vol.~72, pp. 3104--3121, 2024.
	
	\bibitem{vaezi2025ai}
	M.~Vaezi, G.~A. Baduge, E.~Ollila, and S.~A. Vorobyov, ``A tutorial on
	{AI}-empowered integrated sensing and communications,'' \emph{IEEE
		Communications Surveys \& Tutorials}, vol.~28, pp. 4980--5013, 2026.
	
	\bibitem{holistic}
	M.~F. Keskin, M.~M. Mojahedian, J.~O. Lacruz, C.~Marcus, O.~Eriksson,
	A.~Giorgetti, J.~Widmer, and H.~Wymeersch, ``Fundamental trade-offs in
	monostatic {ISAC: A} holistic investigation toward {6G},'' \emph{IEEE
		Transactions on Wireless Communications}, vol.~24, no.~9, pp. 7856--7873,
	2025.
	
	\bibitem{liu2018toward}
	F.~Liu, L.~Zhou, C.~Masouros, A.~Li, W.~Luo, and A.~Petropulu, ``{Toward
		dual-functional radar-communication systems: Optimal waveform design},''
	\emph{IEEE Transactions on Signal Processing}, vol.~66, no.~16, pp.
	4264--4279, 2018.
	
	\bibitem{dassanayake2025unsupervised}
	J.~K. Dassanayake, R.~Kulathunga, G.~A. Baduge, and M.~Vaezi, ``Unsupervised
	learning-based {ISAC} waveforms,'' \emph{IEEE Wireless Communications
		Letters}, vol.~14, no.~9, 2025.
	
	\bibitem{inner_bound}
	A.~R. Chiriyath, B.~Paul, G.~M. Jacyna, and D.~W. Bliss, ``Inner bounds on
	performance of radar and communications co-existence,'' \emph{IEEE
		Transactions on Signal Processing}, vol.~64, no.~2, pp. 464--474, 2016.
	
	\bibitem{du2024pcs}
	Z.~Du, F.~Liu, Y.~Xiong, T.~X. Han, Y.~C. Eldar, and S.~Jin, ``{Reshaping the
		ISAC tradeoff under OFDM signaling: A probabilistic constellation shaping
		approach},'' \emph{IEEE Transactions on Signal Processing}, 2024.
	
	\bibitem{geiger2025joint}
	B.~Geiger, F.~Liu, S.~Lu, A.~Rode, and L.~Schmalen, ``Joint optimization of
	geometric and probabilistic constellation shaping for {OFDM-ISAC} systems,''
	in \emph{Proc. IEEE 5th International Symposium on Joint Communications \&
		Sensing (JC\&S)}, 2025, pp. 1--6.
	
	\bibitem{GS_constellation_design_journal}
	Y.~Liu, Y.~Guo, Y.~Gu, M.~Wang, J.~Liu, and B.~Xia, ``Probabilistic
	shaping-based {ISAC} systems with finite constellations: {Analysis} and
	optimization,'' \emph{IEEE Transactions on Wireless Communications}, vol.~24,
	no.~12, pp. 9868--9881, 2025.
	
	\bibitem{yang2023random}
	X.~Yang, R.~Zhang, D.~Zhai, F.~Liu, R.~Du, and T.~X. Han, ``Constellation
	design for integrated sensing and communication with random waveforms,''
	\emph{IEEE Transactions on Wireless Communications}, vol.~23, no.~11, pp.
	17\,415--17\,428, Nov. 2024.
	
	\bibitem{geiger2023experimental}
	B.~Geiger, A.~Rode, and L.~Schmalen, ``An experimental validation of {ISAC}
	with probabilistic constellation shaping under {OFDM} signaling,'' in
	\emph{Proc. IEEE International Conference on Communications Workshops (ICC
		Workshops)}, Rome, Italy, May 2023, pp. 1--6.
	
	\bibitem{experimental}
	J.~Xu, Z.~Du, J.~Wang, and Y.~Xu, ``An experimental validation of {ISAC} with
	probabilistic constellation shaping under {OFDM} signaling,'' in \emph{Proc.
		IEEE International Conference on Unmanned Systems (ICUS)}, 2024, pp.
	1579--1584.
	
	\bibitem{temporal-frequency}
	Z.~Du, J.~Xu, Y.~Xiong, J.~Wang, M.~Furkan~Keskin, H.~Wymeersch, F.~Liu, and
	S.~Jin, ``Probabilistic constellation shaping for {OFDM ISAC} signals under
	temporal-frequency filtering,'' \emph{IEEE Transactions on Wireless
		Communications}, vol.~25, pp. 12\,914--12\,929, 2026.
	
	\bibitem{tang2024kl}
	S.~Tang, X.~Wang, F.~Xia, and Z.~Fei, ``Kullback-{L}eibler divergence based
	{ISAC} constellation and beamforming design in the presence of clutter,''
	\emph{ZTE Communications}, vol.~22, no.~3, pp. 4--12, Sep. 2024.
	
	\bibitem{learning_based}
	J.~Hu, K.~Han, L.~Jiang, K.~Meng, F.~Liu, and C.~Masouros, ``Learning-based
	constellation design for uplink bi-static integrated sensing and
	communication,'' \emph{IEEE Transactions on Vehicular Technology}, vol.~74,
	no.~8, pp. 13\,219--13\,224, 2025.
	
	\bibitem{mateos2022end}
	J.~M. Mateos-Ramos, J.~Song, Y.~Wu, C.~H{\"a}ger, M.~F. Keskin,
	V.~Yajnanarayana, and H.~Wymeersch, ``End-to-end learning for integrated
	sensing and communication,'' in \emph{Proc. IEEE International Conference on
		Communications (ICC)}, May 2022, pp. 1942--1947.
	
	\bibitem{proakis2008digital}
	J.~G. Proakis and M.~Salehi, \emph{Digital Communications}, 5th~ed.\hskip 1em
	plus 0.5em minus 0.4em\relax New York, NY, USA: McGraw-Hill, 2008.
	
	\bibitem{empirical}
	S.~Arimoto, ``An algorithm for computing the capacity of arbitrary discrete
	memoryless channels,'' \emph{IEEE Transactions on Information Theory},
	vol.~18, no.~1, pp. 14--20, 1972.
	
	\bibitem{skolnik2001radar}
	M.~I. Skolnik, \emph{Introduction to Radar Systems}, 3rd~ed.\hskip 1em plus
	0.5em minus 0.4em\relax New York, NY, USA: McGraw-Hill, 2001.
	
	\bibitem{kennedy1995pso}
	J.~Kennedy and R.~Eberhart, ``Particle swarm optimization,'' in \emph{Proc.
		IEEE International Conference on Neural Networks}, vol.~4, Perth, WA,
	Australia, 1995, pp. 1942--1948.
	
	\bibitem{Kay1993}
	S.~M. Kay, \emph{{F}undamentals of {S}tatistical {S}ignal {P}rocessing:
		{E}stimation {T}heory}.\hskip 1em plus 0.5em minus 0.4em\relax Upper Saddle
	River, NJ, USA: Prentice-Hall, Inc., 1993.
	
	\bibitem{SimonDigital2004}
	M.~K. Simon and M.-S. Alouini, \emph{{Digital Communication over Fading
			Channels}}.\hskip 1em plus 0.5em minus 0.4em\relax New York, NY, USA: John
	Wiley \& Sons, 2004.
	
	\bibitem{Bishop2006}
	C.~M. Bishop and N.~M. Nasrabadi, \emph{{Pattern Recognition and Machine
			Learning}}.\hskip 1em plus 0.5em minus 0.4em\relax Springer, 2006, vol.~4,
	no.~4.
	
	\bibitem{richards2022radar}
	M.~A. Richards, \emph{Fundamentals of Radar Signal Processing}, 3rd~ed.\hskip
	1em plus 0.5em minus 0.4em\relax New York, NY, USA: McGraw-Hill, 2022,
	chapter~6.
	
	\bibitem{Craig1991}
	J.~Craig, ``A new, simple and exact result for calculating the probability of
	error for two-dimensional signal constellations,'' in \emph{Proc. IEEE
		Military Communications Conference (MILCOM)}, 1991, pp. 571--575.
	
	\bibitem{Atapattu2011}
	S.~Atapattu, C.~Tellambura, and H.~Jiang, ``A mixture {Gamma} distribution to
	model the {SNR} of wireless channels,'' \emph{{IEEE} Trans. Wireless
		Commun.}, vol.~10, no.~12, pp. 4193--4203, 2011.
	
	\bibitem{gradshteyn2014table}
	I.~S. Gradshteyn and I.~M. Ryzhik, \emph{{Table of Integrals, Series, and
			Products}}.\hskip 1em plus 0.5em minus 0.4em\relax Academic press, 2014.
	
	\bibitem{Minka2002}
	T.~P. Minka, ``{Estimating a Gamma Distribution},'' \emph{Microsoft Research,
		Cambridge, UK, Tech. Rep}, 2002.
	
\end{thebibliography}


\end{document}